\definecolor{myurlcolor}{rgb}{0,0,0.7}
\newcommand{\ketbra}[2]{|{#1} \rangle \langle {#2} |}
\theoremstyle{plain}
\newtheorem{thm}{Theorem}
\begin{document}

\title{Energy Cost of Creating Quantum Coherence}
\author{Avijit Misra}
\email{avijit@hri.res.in}
\affiliation{Harish-Chandra Research Institute, Allahabad, India}

\author{Uttam Singh}
\email{uttamsingh@hri.res.in}
\affiliation{Harish-Chandra Research Institute, Allahabad, India}

\author{Samyadeb Bhattacharya}
\email{samyadebbhattacharya@hri.res.in}
\affiliation{Harish-Chandra Research Institute, Allahabad, India}

\author{Arun Kumar Pati}
\email{akpati@hri.res.in}
\affiliation{Harish-Chandra Research Institute, Allahabad, India}

\begin{abstract}
We consider the physical situations where the resource theories of coherence and thermodynamics play competing roles. In particular, we study the creation of quantum coherence using unitary operations with limited thermodynamic resources. We find the maximal coherence that can be created under unitary operations starting from a thermal  state and find explicitly the unitary transformation that creates the maximal coherence. Since coherence is created by unitary operations starting from a thermal state, it requires some amount of energy. This motivates us to explore the trade-off between the amount of coherence that can be created and the energy cost of the unitary process. We also find the maximal achievable coherence under the constraint on the available energy. Additionally, we compare the maximal coherence and the maximal total correlation that can be created under unitary transformations with the same available energy at our disposal. We find that when maximal coherence is created with limited energy, 
the total correlation created in the process is upper bounded by the maximal coherence and vice versa. For two qubit systems we show that there does not exist any unitary transformation that creates maximal coherence and maximal total correlation simultaneously with a limited energy cost. 
\end{abstract}

\maketitle
\section{Introduction}

The superposition principle in quantum physics gives rise to the classically counter-intuitive traits like coherence and entanglement \cite{entanglement-rmp}. Over the past few years, several works have been done starting from quantifying quantum superposition \cite{Aberg2006} to establishing the full fledged resource theories of coherence \cite{Baumgratz2014, Marvian14, Gour2008}. The restrictions underlying a quantum resource theory (QRT) are manifestations of the physical restrictions that govern the physical processes. For example, in the QRT of entanglement, the consideration of local operations 
and classical communication as the allowed operations stems from the natural limitation to implement global operations on a multipartite quantum system with parties separated apart. Considering the technological advancements towards processing of small scale quantum systems and proposals of nano-scale heat engines, it is of utmost importance to investigate the thermodynamic perspectives of quantum features like coherence and entanglement and it has attracted a great deal of interest (see 
for example Refs. \cite{Alicki2013, Hovhannisyan2013, Brunner2014, Abah2014, Correa2014, Gallego2014, Galve2014, Carlisle2014, Scarani2002, Eisert2004, Huber2014, Bruschi2015, Anders-review,Huber-review,Anders2015, Llobet2015, Korzekwa2016}). Since energy conservation restricts 
the thermodynamic processing of coherence, a quantum state having coherence can be viewed as a resource in thermodynamics as it allows transformations that are otherwise impossible  \cite{Rudolph114, Rudolph214}. These explorations have got renewed interests in recent times for their possible implications in various areas like information theory \cite{Maruyama2009,Rio2011,  Fernando2013, HorodeckiI2013, Brandao2015} and quantum biology \cite{Engel2007, Lloyd2011, Collini2010, Lambert2013, Chin2012}. In particular, it is also shown that quantum coherence allows for better transient cooling in the absorption refrigerators and this phenomenon is dubbed as coherence-assisted single shot cooling \cite{Mitchison2015}.

In this work, with aforesaid motivation, we explore the intimate connections between the resource theory of quantum coherence and thermodynamic limitations on the processing of quantum coherence. In particular, we  study the creation of quantum coherence by unitary transformations with limited energy. We go on even further to present a comparative investigation of creation of quantum coherence and mutual information within the imposed thermodynamic constraints. Considering a thermally isolated quantum system initially in a thermal state, we perform an arbitrary unitary operation 
on the system to create coherence. First we find the upper bound on the coherence that can be created using arbitrary unitary operations starting from a fixed thermal state and then we show explicitly that irrespective of the temperature of the initial thermal state the upper bound on coherence can always be saturated. Such a physical process will cost us some amount of energy and hence it is natural to ask that if we have a limited supply of energy to invest then what is the maximal achievable coherence in such situations? Further, we investigate whether both coherence and mutual information can be created maximally by applying a single unitary operation on a two qubit quantum system. We find that it is not possible to achieve maximal quantum coherence and mutual information simultaneously. Our results are relevant for the quantum information processing in physical systems 
where thermodynamic considerations cannot be ignored as we have discussed in the preceding paragraph.

The organization of the paper is as follows. In Sec. \ref{sec1} we introduce the concepts that are necessary for further expositions of our work and present the form of the unitary operation that saturates the upper bound on the amount of coherence that can be created by applying unitary transformations starting from an incoherent state. In Sec. \ref{sec2} we provide various results on the creation of coherence when we have a limited availability of energy. In Sec. \ref{sec3} we compare the processes of creating maximal coherence and maximal mutual information again with the limited thermodynamic resource. Finally we conclude with some possible implications in Sec. \ref{sec4}. 

\section{Maximum Achievable Coherence under arbitrary unitary operations}
\label{sec1}
In this section, we first discuss the QRT of coherence and then find a protocol for achieving maximal quantum coherence under unitary operations.
\subsection{Quantum Coherence}
Any QRT is formed by identifying the relevant physical restrictions on the set of quantum operations and preparation of quantum states. For example, in the QRT of thermodynamics the set of allowed operations is identified as the thermal operations and the only free state is the thermal state for a given fixed Hamiltonian \cite{Fernando2013, HorodeckiI2013}. The QRT of thermodynamics is well accepted and has seen enormous progress in recent years \cite{Brandao2015, Rudolph114, Rudolph214, Narasimhachar2015}. However, there is still an ongoing debate on the possible choices of restricted operations that will define the resource theory of coherence for finite dimensional quantum systems 
\cite{Marvian14, Baumgratz2014, StreltsovA2015, Yadin2015}. The field of QRT of coherence has been significantly advanced over past few years \cite{Streltsov2015, Manab2015, Angelo2015, Uttam2015, Fan15, Pinto2015, Du2015, Yao2015, Xi2015, Girolami14, Bromley2015, Killoran2015, ZhangA2015, UttamA2015, Cheng2015, Mondal2015, MondalA2015, Kumar2015, 
Mani2015, Bu2015, Chitambar2015, ChitambarA2015, BuA2015, StreltsovA2015, StreltsovAA2015, ChitambarA2016, Marvian2016, Peng2016}. The measures of coherence are inherently basis dependent and the relevant reference basis is provided by the experimental situation at 
hand. Here we will be contended with the measures of coherence that are obtained using QRT of coherence based on the incoherent operations as introduced in Ref. \cite{Baumgratz2014}. 
However, very recently a refinement over Ref. \cite{Baumgratz2014} on the properties that a coherence measure should satisfy has been proposed in Ref. \cite{Peng2016}. This refinement imposes an extra condition on the measures of coherence such that the set of states having maximal coherence value with respect to the coherence measure and the set of maximally coherent states, as defined in Ref. \cite{Baumgratz2014}, should be identical.
%
%
%
%
Moreover, the unitary freedom in Kraus decomposition of a quantum channel implies that an incoherent channel with respect to a particular physical realization (Karus decomposition) may not be incoherent with respect to other realizations of the same channel. This has led to the introduction of genuinely incoherent operations \cite{StreltsovA2015}. Genuinely incoherent operations are the ones that preserve the incoherent states. It is proved that the set of the genuinely incoherent operations is a strict subset of the incoherent operations, therefore, every coherence monotone based on the set of the incoherent operations is also a genuine coherence monotone \cite{StreltsovA2015}.

In this work, we consider the relative entropy of coherence as a measure of coherence which enjoys various operational interpretations \cite{Winter2015, UttamA2015}. The relative entropy of coherence of any state $\rho$ is defined as $C_{r}(\rho) = S(\rho^d) - S(\rho)$ where $S(\rho)=-\mathrm{Tr} (\rho \log \rho)$, is the von Neumann entropy and $\rho^d=\sum_i\langle i | \rho |i \rangle | i 
\rangle\langle i|$ is the diagonal part of $\rho$ in the reference basis $\{\ket{i}\}$. Here and in the rest parts of the paper all the logarithms are taken with respect to base $2$. It is to be noted that the relative entropy of coherence is also a genuine coherence monotone. Moreover, it also satisfies the additional requirement as proposed in Ref. \cite{Peng2016}.
\subsection{Maximum Achievable Coherence}
Let us now consider the creation of maximal coherence, which we define shortly, starting from a thermal state by unitary operations. The prime motivation for starting with a thermal initial state is that the surrounding may be considered as a thermal bath and as the system interacts with the 
surrounding, it eventually gets thermalized. However, our protocol for creating maximal coherence is applicable for any incoherent state. Let us now consider an arbitrary quantum system in contact with 
a heat bath at temperature $ T=1/\beta$ ( we set the Boltzmann constant to be unity and follow this convention throughout the paper). The thermal state of a system with Hamiltonian $H = \sum_{j=1}^{d}E_j\ket{j}\bra{j}$ is given by
\begin{align}
 \rho_T = \frac{1}{Z}e^{-\beta H},
\end{align}
where $d$ is the dimension of the Hilbert space and $Z = \mathrm{Tr}[e^{-\beta H}]$ is the partition function. The maximum amount of coherence $C_{r,max}(\rho_f)$ that can be created starting from $\rho_T$ by unitary 
operations is given by
\begin{align}
 \label{eq:max}
 C_{r,max}(\rho_f)&= 
\max_{\{\rho_f|S(\rho_f)=S(\rho_T)\}}\{S(\rho_f^D)-S(\rho_T)\}.
\end{align}
As the maximum entropy of a quantum state in $d$-dimension is $\log d$, the amount of coherence that can be created starting from $\rho_T$, by a unitary transformation, always follows the inequality
\begin{align}
\label{Eq:bound}
 C_{r}(\rho_f) \leq \log d - S(\rho_T).
\end{align} 
Now the question is whether the bound is tight or not, i.e., is there any unitary operation that can lead to the creation of $\log d-S(\rho_T)$ amount of coherence starting from $\rho_T$? But before we answer this question, let us digress on its importance first. Coherence in energy eigenbasis plays a crucial role in quantum thermodynamic protocols and several quantum information processing tasks. For example in Ref. \cite{Mitchison2015}, it has been demonstrated that if the initial qubits of a three qubit refrigerator possess even a little amount of coherence in energy eigenbasis then the cooling can be significantly better. In the small scale refrigerators, the three constituent qubits initially remain in corresponding thermal states associated with the three thermal baths. Therefore, one needs to create 
coherence by external means. Hence, creation of coherence from thermal states may be fruitful and far-reaching for better functioning of various nano scale thermal machines and diverse thermodynamic protocols. These are the main motivations for studying creation of coherence from thermodynamic perspective. We consider closed quantum systems and hence allow only unitary operations for creating coherence. Of course, after creating the coherence via the unitary transformation we have to isolate or take away the system from the heat bath so that it does not get thermalized again. Now, we show that the bound in Eq. \eqref{Eq:bound} is achievable by finding the unitary operation $U$ such that $\rho_f = U\rho_iU^\dag$ has maximal amount of coherence. Since the relative entropy 
of coherence of $\rho_f$ is given by $S(\rho_f^D)-S(\rho_f)$, one has to maximize the entropy of the  diagonal density matrix $\rho_f^D$. The quantum state that is the diagonal of a quantum state $\rho$ is denoted as $\rho^D$ throughout the paper.

First, we construct a unitary transformation that results in rotating the energy eigenbasis to the maximally coherent basis as follows. The maximally coherent basis $\{\ket{\phi_j}\}_j$ is defined as $\ket{\phi_j} = \mathbb{Z}^j\ket{\phi}$, where
\begin{align}
 \mathbb{Z} = \sum_{m=0}^{d-1} e^{\frac{2\pi i m}{d}}\ket{m}\bra{m},
\end{align}
and $\ket{\phi}=\frac{1}{\sqrt{d}}\sum_{i=0}^{d-1}\ket{i}$. It can be verified easily that, $\langle \phi_j\ket{\phi_k}=\delta_{jk}$. Also, note that all the states in $\{\ket{\phi_j}\}_j$ and $\ket{\phi}$ are the maximally coherent states \cite{Baumgratz2014} and have equal amount of the relative entropy of coherence which is equal to $\log d$. Now consider the unitary operation
\begin{align}
\label{eq:choice-u}
 U = \sum_{j}\ket{\phi_j}\bra{j},
\end{align}
which changes energy eigenstate $\ket{j}$ to the maximally coherent state $\ket{\phi_j}$. Starting from the thermal state $\rho_T$, the final state $\rho_f$ after the application of $U$ is given by
\begin{align}
 \rho_f = \sum_{j} \frac{e^{-\beta E_j}}{Z} \ket{\phi_j}\bra{\phi_j}.
\end{align}
Since $\rho_f$ is a mixture of pure states that all have maximally mixed diagonals, the bound in Eq. (\ref{Eq:bound}) is achieved. We note that $U$ in Eq. \eqref{eq:choice-u} is only one possible choice among the possible unitaries achieving the bound in Eq. (\ref{Eq:bound}). For example, any permutation of the indices $j$ of $\ket{\phi_j}$ in Eq. \eqref{eq:choice-u} is also a valid choice to achieve the bound. It is worth mentioning that even though we consider thermal density matrix to start with to create maximal coherence in energy eigenbasis, following the same protocol maximal coherence can be created from any arbitrary incoherent state in any arbitrary reference basis.

To create coherence by unitary operations starting from a thermal state, some amount of energy is required. Now, let us ask how much energy is needed on an average to create the maximal amount of coherence. Let $\rho_T \rightarrow V\rho_{T} V^\dag$, then the energy cost of any arbitrary unitary operation $V$ acting on the thermal state is given by
\begin{align}
\label{Eq:work-cost}
 W =\mathrm{Tr}[H (V\rho_{T} V^\dag - \rho_{T})].
\end{align}
Since we are dealing with energy eigenbasis, we have $E(\rho^D)=E(\rho)$. Here $E(\rho)=\mathrm{Tr}(H \rho)$ is the average energy of the system in the state $\rho$. The energy cost to create maximum coherence starting from the thermal state $\rho_T$ is given by
\begin{align}
 W_{max} =\mathrm{Tr}[H (U\rho_{T} U^\dag - \rho_{T})]= \frac{1}{d}\mathrm{Tr}[H] -  \frac{1}{Z}\mathrm{Tr}[H e^{-\beta H}].
\end{align}
Here $U$ is given by Eq. \eqref{eq:choice-u}. Note that maximal coherence can always be created by unitary operations starting from a finite dimensional thermal state at an arbitrary finite temperature, with finite energy cost. However, it is not possible to create coherence by unitary operation starting from a thermal state at infinite temperature i.e., the maximally mixed state.
\section{Creating Coherence with limited energy}
\label{sec2}
Since energy is an independent resource, it is natural to consider a scenario where creation of coherence is limited by a constraint on available energy. In this section we consider creation of optimal amount of coherence at a limited energy cost $\Delta E$ starting from $\rho_T$. To maximize the coherence, one needs to find a final state $\rho_f$ whose diagonal part $\rho_f^D$ has maximum entropy with fixed average energy $E_T+\Delta E$, where $E_T$ is the average energy of the initial thermal state $\rho_T$. Note that $E(\rho_f^D)=E(\rho_f)$. From maximum entropy principle  
\cite{JaynesA1957, JaynesB1957}, we know that the thermal state has maximum entropy among all states with fixed average energy. Therefore, the maximum coherence $C^{\Delta E}_{r,max}$, that can be created with $\Delta E$ amount of available energy is upper bounded by

\begin{align}
\label{Eq:bound-energy}
 C^{\Delta E}_{r,max}\leq S(\rho_{T'}) - S(\rho_T).
\end{align}
Here $\rho_{T'}$ is a thermal state at a higher temperature $T'$ such that $\Delta E = \mathrm{Tr}[H(\rho_{T'}- \rho_{T})]$. Thus, in order to create maximal coherence at a limited energy cost, one should look for a protocol such that the diagonal part of $\rho_f$ is thermal state at a higher temperature $T'$ (depending on the energy spent $\Delta E$), i.e., $ \rho_{f}^D =  \rho_{T'}$. Now it is obvious to inquire whether there always exists an optimal unitary $U^{*}$ that serves the purpose. Theorem \ref{prop1} answers this question in affirmative. 
\begin{thm}
\label{prop1}
There always exists a real orthogonal transformation $R$ that creates maximum coherence $S(\rho_{T'}) - S(\rho_T)$, starting from the thermal state $\rho_T$ and spending only $\Delta E = \mathrm{Tr}[H(\rho_{T'}- \rho_{T})]$ amount of energy.
\end{thm}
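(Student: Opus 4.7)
The plan is to reduce the theorem to a classical majorization statement and then invoke Horn's theorem on orthostochastic matrices. Observe first that for any unitary $U$, unitaries preserve the von Neumann entropy, so $C_r(U\rho_T U^\dagger) = S((U\rho_T U^\dagger)^D) - S(\rho_T)$; and since $H$ is diagonal in the reference basis, the energy cost equals $\mathrm{Tr}[H((U\rho_T U^\dagger)^D - \rho_T)]$. Both the coherence produced and the energy cost therefore depend on the unitary only through the diagonal of the output state. Combined with the upper bound \eqref{Eq:bound-energy}, it suffices to exhibit a real orthogonal $R$ such that $(R\rho_T R^T)^D = \rho_{T'}$; for such an $R$, the coherence automatically equals $S(\rho_{T'}) - S(\rho_T)$ and the energy cost is exactly $\mathrm{Tr}[H(\rho_{T'} - \rho_T)] = \Delta E$.

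Writing $R$ in the energy basis, the $j$-th diagonal entry of $R\rho_T R^T$ is $\sum_k R_{jk}^2 p_k$, where $p_k = e^{-\beta E_k}/Z$ are the eigenvalues of $\rho_T$. The required condition is then that the orthostochastic matrix $M_{jk} := R_{jk}^2$ send the eigenvalue vector $\vec{p}$ of $\rho_T$ to the eigenvalue vector $\vec{q}$ of $\rho_{T'}$, where $q_j = e^{-\beta' E_j}/Z'$. By Horn's theorem, such an orthostochastic transport exists if and only if $\vec{q}$ is majorized by $\vec{p}$, so the problem is reduced to establishing this majorization.

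The main technical step, and the only non-routine one, is therefore the claim that for thermal states of the same Hamiltonian at temperatures $T' \geq T$ one has $\vec{q} \prec \vec{p}$. Labelling the levels so that $E_1 \leq E_2 \leq \dots \leq E_d$ sorts both $\vec{p}$ and $\vec{q}$ in decreasing order simultaneously, so it suffices to compare partial sums. Consider
\begin{equation*}
f_k(\beta) \;=\; \frac{\sum_{j=1}^k e^{-\beta E_j}}{\sum_{j=1}^d e^{-\beta E_j}}.
\end{equation*}
A direct differentiation gives $f_k'(\beta) = f_k(\beta)\bigl[\langle E\rangle_{\text{full Gibbs}} - \langle E\rangle_{\text{Gibbs on lowest }k}\bigr]$, which is non-negative because the restricted average sees only low-energy levels. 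Thus $f_k$ is non-decreasing in $\beta$, and $\beta' \leq \beta$ yields $\sum_{j=1}^k q_j \leq \sum_{j=1}^k p_j$ for every $k$, i.e., $\vec{q} \prec \vec{p}$. Horn's theorem then provides the desired real orthogonal $R$, which by the reduction in the first paragraph achieves maximum coherence $S(\rho_{T'}) - S(\rho_T)$ at energy cost exactly $\Delta E$. I expect the sole subtle point to be handling the case of degeneracies in the spectrum of $H$ when sorting $\vec{p}$ and $\vec{q}$ compatibly, but this is harmless since degenerate levels can be ordered arbitrarily and consistently in both vectors.
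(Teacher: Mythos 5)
Your proof is correct and follows essentially the same route as the paper: reduce the problem to transporting the diagonal of $\rho_T$ onto that of $\rho_{T'}$ by an orthostochastic matrix, establish the majorization $\vec{q}\prec\vec{p}$ between the two thermal distributions, and invoke Horn's theorem to realize it with a real orthogonal $R$. The only difference is that you prove the thermal majorization explicitly via the monotonicity in $\beta$ of the partial sums $f_k(\beta)$ (a correct computation, since the Gibbs average restricted to the lowest $k$ levels never exceeds the full Gibbs average), whereas the paper cites this fact from the literature.
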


\begin{proof}
To prove the theorem, we first show that the unitary transformations on a quantum state induce doubly stochastic \footnote{A $d\times d$ matrix $M=(M_{ab})$ ($a\in \{1,\ldots,d\}$, $b\in \{1,\ldots,d\}$ and $M_{ab}\geq 0$) is called a doubly stochastic matrix if $\sum_aM_{ab}=1$ for all $b$ and $\sum_bM_{ab}=1$ for all $a$ \cite{Bhatia1997}} maps on the diagonal part of the quantum state. Note that, we start from the thermal state $\rho_T=\sum_{j=0}^{d-1}\frac{e^{-\beta E_j}}{Z}\ket{j}\bra{j}$. The diagonal part of $\rho_T$ transforms under the action of a unitary $U$ as follows:
\begin{align}
 \mbox{diag}\{U \rho_T U^\dag\} =\sum_{i=0}^{d-1} q_i\ket{i}\bra{i},
\end{align}
where $q_i =\frac1Z \sum_{j=0}^{d-1}M_{ij}e^{-\beta E_j}$ and $M$, with entries $M_{ij}~=~ 
\bra{i}U\ket{j}\bra{j}U^\dag\ket{i}$, is a doubly stochastic matrix. Therefore, the diagonal part is transformed by the doubly-stochastic matrix $M$ such that
\begin{align}
\vec{ Q}= M \vec{P_T},
\end{align}
where $\vec{P_T} =\frac{1}{Z} \{e^{-\beta E_0}, e^{-\beta E_1},\dots,e^{-\beta E_{d-1}}\}^T$ is the diagonal vector corresponding to the initial thermal state $\rho_T$ and $\vec{ Q}$ is the diagonal vector corresponding to the diagonal part of the final state. For two thermal states $\rho_{T'}$ and $\rho_T$, corresponding to the same Hamiltonian, we have $\vec{P_{T'}}\prec \vec{P_{T}}$ if $T'>T$ \cite{Canosa2005}. From the results of the theory of majorization \cite{Bhatia1997}, it can be concluded that there always exists an orthostochastic \footnote{Consider a $d\times d$ real orthogonal matrix $O=(O_{ab})$ ($a\in \{1,\ldots,d\}$, $b\in \{1,\ldots,d\}$ and $O_{ab}\geq 0$) such that $O^TO=\mathbb{I}$. Here superscript $T$ denotes the transpose and $\mathbb{I}$ is a $d\times d$ identity matrix. The matrix $(O_{ab}^2)$ forms a doubly stochastic matrix and such a matrix is called orthostochastic matrix \cite{Bhatia1997}.} matrix $B$ such that $\vec{P_{T'}}=B\vec{P_{T}}$. Hence, the real orthogonal 
operator $R$, corresponding to the orthostochastic matrix $B$, transform the initial thermal state $\rho_T$ to a final state $\rho_f$ such that $\rho_f^D=\rho_{T'}$. Therefore, there always exists a real orthogonal transformation $R$ that creates $S(\rho_{T'}) - S(\rho_T)$ amount of coherence, starting from the thermal state $\rho_T$ and spending only $\Delta E = \mathrm{Tr}[H(\rho_{T'}- \rho_{T})]$ amount of energy. This completes the proof.
\end{proof}

\subsection{Qubit system}
In the following, we find out explicitly the real unitary transformation that allows creation of maximal coherence with limited energy at our disposal for the case of a qubit system with the Hamiltonian $H= E\ketbra{1}{1}$. The initial thermal state is given by $\rho_T=p\ket{0}\bra{0}+(1-p)\ket{1}\bra{1}$ 
with $p= \frac{1}{1+e^{-\beta E}}$. Now our goal is to create maximal coherence by applying an optimal unitary $U^*$, investing only $\Delta E$ amount of energy. The average energy of the initial thermal state $\rho_T$ is given by $(1-p)E$. As we have discussed earlier that for maximal coherence creation with $\Delta E$ energy cost, the diagonal part of the final state must have to be a thermal state, $\rho_{T'}= q\ket{0}\bra{0}+(1-q)\ket{1}\bra{1}$, at some higher temperature $T'$, with average energy $(1-p)E+\Delta E$. Here, $q$, and hence $T'$, is determined from the energy constraint as $q=p-\frac{\Delta E}{E}= \frac{1}{1+e^{-\beta' E}}$. From theorem \ref{prop1}, it is evident that there always exists a rotation operator $R$ which creates the maximal coherence. Consider a rotation operator of the form 
\begin{align}
\label{rot-mat}
R(\theta)= \left (  \begin{array}{cc}
                 \cos\theta & -\sin\theta\\
                 \sin\theta &  \cos\theta
                 \end{array} \right)
\end{align}
that transforms $\rho_{T}$ as follows
\begin{align}
\rho_f&=R(\theta)\rho_T R^{T}(\theta)\nonumber\\
&= \left (  \begin{array}{cc}
                 p\cos^2\theta +(1-p)\sin^2\theta & (2p-1)\sin\theta\cos\theta\\
                 (2p-1)\sin\theta\cos\theta &  p\sin^2\theta +(1-p)\cos^2\theta
                 \end{array} \right).\nonumber
\end{align}
We need the diagonal part of the final state to be the thermal state $\rho_{T'}$ at a higher temperature $T'$. Therefore,  
\begin{align}
\label{qvalue}
\begin{array}{ll}
~~~~q~~~=p\cos^2\theta+(1-p)\sin^2\theta.
\end{array}
\end{align}
As R.H.S of Eq. (\ref{qvalue}) is a convex combination of $p$ and $(1-p)$ and $p\geq q\geq 1/2\geq (1-p)$, by suitably choosing $\theta$ we can reach to the desired final state $\rho_f$ such that $\rho_f^D=\rho_{T'}$. The angle of rotation $\theta$ is given by
\begin{equation}
\label{theta}
 \theta= \cos^{-1}\left(\sqrt{\frac{p+q-1}{2p-1}}\right).
\end{equation}
Thus, the maximal coherence at constrained energy cost $\Delta E$ can be created from a qubit thermal state by a two dimensional rotation operator as given by Eq. (\ref{rot-mat}). 

\subsection{Qutrit system}
For qubit systems, a two dimensional rotation with the suitably chosen $\theta$ is required to create maximum coherence starting from a thermal state at a finite temperature with limited available energy. For higher dimensional systems, it follows from theorem 1, that there always exists a rotation which serves the purpose of maximal coherence creation. However, finding the exact rotation operator for a given initial thermal density matrix and energy constraint is not an easy task. Even for a qutrit system finding the optimal rotation is nontrivial. In what follows, we demonstrate the protocol for creating maximal coherence with energy constraint starting from a thermal state for qutrit systems. Note that by applying a unitary operation on a thermal qubit, one has to invest some energy and thus, the excited state population corresponding to diagonal part of the final qubit is always increased. Therefore, for the case of qubit systems, one only has to give a rotation by an angle $\theta$, depending on the available energy to create maximal coherence starting from a given thermal state.
 For a thermal state in higher dimension, we know that with the increment in temperature (energy), the occupation probability of 
the ground state will always decrease and the occupation probability will increase for the highest 
excited state. But what will happen for the intermediate energy levels? Let us first answer this particular question considering an initial thermal state of the form 
\begin{align}
\label{3q-1}
\rho_T = \sum_{j=1}^{d} p_j\ketbra{j}{j},
\end{align}
where $p_j=\frac{e^{-E_j/T}}{\sum_j e^{-E_j/T}}$ is the occupation probability of the $j^{\mathrm{th}}$ energy level. Differentiating $p_j$ with respect to the temperature, we get
\begin{align}
\label{3q-3}
\frac{\partial p_j}{\partial T}=-\frac{(\langle E\rangle_T-E_j)}{T^2} p_j.
\end{align}
Therefore, for energy levels lying below the average energy of the thermal state, the occupation probabilities will decrease with the increase of temperature and the occupation probabilities will increase for the energy levels lying above the average energy. Making use of this change in occupation probabilities, we now provide a protocol for maximum coherence creation in thermal qutrit systems with a constraint on the available energy. We consider a qutrit system with the system Hamiltonian $H= E\ketbra{1}{1}+2E\ketbra{2}{2}$. The initial thermal qutrit state is given by $\rho_T=p\ketbra{0}{0}+(1-p-q)\ketbra{1}{1}+q\ketbra{2}{2}$ with average energy $\langle E\rangle_T=(1-p-q)E+2qE$.  Here $p=1/Z$ and $q=e^{-2\beta E}/Z$, where $Z=1+e^{-\beta E}+e^{-2\beta E}$ is the partition function. The diagonal density matrix of the final state is a thermal qutrit state at temperature  $T'$ with average energy $\langle E\rangle_T'=(1-p-q)E+2qE+\Delta E$, when we create coherence with $\Delta E$ energy constraint.

We show that just two successive rotations in two dimension suffice the purpose of maximum amount of coherence creation. For equal energy spacing of $\{0,E,2E\}$, the average energy at infinite temperature is given by $\langle E\rangle_{\infty}=E$. So, for an arbitrary finite temperature, the 
condition $E>\langle E\rangle_T$ holds true. Thus for the aforementioned qutrit thermal system, with the increase in temperature, the occupation probabilities of the first and second excited states will 
always increase at the expense of the decrease in occupation probability for the ground state. The diagonal elements of the final state should be the occupation probabilities of the thermal state at a higher temperature $T'$, given by $p', 1-p'-q'~ \mbox{and}~q'$ for the ground, first and second excited states, respectively. From the conservation of probabilities, it follows that
\begin{align}
\label{3q-4}
-\Delta p&= p-p'= (q'-q)+(1-p'-q')-(1-p-q)\nonumber\\
&=\Delta q+\Delta(1-p-q),
\end{align}
Note that, we always have $-\Delta p>\Delta q>0$.

Now, let us first apply a rotation about $\ket{1}$. Physically, this rotation creates coherence between basis states $\ket{0}$ and $\ket{2}$.  The rotation can be expressed by the unitary $R_1(\alpha)=e^{-i\alpha J_{1}}$, where 

\begin{align}
\label{3q-5}
J_{1}=\left(\begin{array}{ccc}
             0&0&i\\
             0&0&0\\
             -i&0&0\\
\end{array} \right)            
\end{align}
is the generator of the rotation. Then another rotation is applied about $\ket{2}$, which is given by $R_2(\delta)=e^{-i\delta J_{2}}$, where 
\begin{align}
\label{3q-6}
J_{2}=\left(\begin{array}{ccc}
             0&-i&0\\
             i&0&0\\
             0&0&0\\
\end{array} \right).           
\end{align}
After the action of two successive rotations, given by $R_2(\delta)R_1(\alpha)$, we have
\begin{align}
\label{3q-8}
q'=q\cos^2\alpha+p\sin^2\alpha
\end{align}
and 
\begin{align}
\label{3q-9}
p'&=(q\sin^2\alpha+p\cos^2\alpha)\cos^2\delta+(1-p-q)\sin^2\delta\nonumber\\
&=(p-\Delta q)\cos^2\delta+(1-p-q)\sin^2\delta.
\end{align}
From Eq. (\ref{3q-8}), it is clear that $q'$ is a convex combination of $p$ and $q$, and since $q<q'<p$, there always exists a angle of rotation $\alpha$, depending on the available energy so that the protocol can be realized. The angle of rotation is given by $\alpha=\cos^{-1}\sqrt{\frac{p-q'}{p-q}}$, where $\alpha \in [0,\pi/2]$. Similarly, Eq. (\ref{3q-9}), suggests that $p'$ is a convex combination of $(p-\Delta q)$ and $(1-p-q)$ and since $1-p-q<p'<p-\Delta q$ (Eq.\ref{3q-4}), one can always achieve any desired value of $p'$, by suitably choosing $\delta \in [0,\pi/2]$, with $\delta=\cos^{-1}\sqrt{\frac{p'-(1-p-q)}{(p-\Delta q)-(1-p-q)}}$. Thus, maximal coherence at finite energy cost can be created by two successive two dimensional rotations starting from a thermal state of a qutrit system. Note that we have considered equal energy spacing $\{0,E,2E\}$, however, the above protocol will hold for any energy spacing for which the condition $E_1 > \langle E\rangle_T$ holds, where $E_1$ is the energy of the 
energy eigenstate $\ket{1}$.

\section{Coherence versus Correlation}
\label{sec3}
In this section we carry out a comparative study between maximal coherence creation and maximal total correlation creation (also see Ref. \cite{Huber2014}) with limited available energy. We consider an arbitrary $N$ party system acting on a Hilbert space $\mathcal{H}^{d_1}\otimes \mathcal{H}^{d_2}\otimes\ldots\otimes \mathcal{H}^{d_N}$. The Hamiltonian of the composite system is non-interacting and given by $H_{tot}=H_1+H_2+\ldots+H_N$. For the sake of simplicity we consider $H_1=H_2=\ldots=H_N=H$. However, our results hold in general. Suppose there exists an optimal unitary operator $U^*$ which creates maximal total correlation from initial thermal state $\rho_T$ with $\Delta E$ energy cost. It is shown in Ref. \cite{Huber2014} that the maximal correlation (multipartite mutual information) that can be created by a unitary transformation with energy cost $\Delta E$ is given by
\begin{equation}
\label{max-cor}
I^{\Delta E}_{max}= \sum_i\left[S(\rho^i_{T'})-S(\rho^i_{T})\right].
\end{equation}
In the protocol to achieve the maximal correlation, the subsystems of the composite system $\rho^N_T$ transform to the thermal states $\rho^i_{T'}$ of the corresponding individual systems at some higher temperature $T'$ \cite{Huber2014}. It is interesting to inquire that how much coherence is created 
during this process as in several quantum information processing tasks it may be needed to create both the coherence and correlation, simultaneously. The amount of coherence created $ C_r|_{I^{\Delta E}_{max}}$, when the unitary transformation creates maximal correlation is given by
\begin{equation}
 C_r|_{I^{\Delta E}_{max}}= S(\rho_f^D)-\sum_iS(\rho^i_{T}). 
\end{equation}
As the Hamiltonian is noninteracting, $\rho_f^D$ and the product of the marginals ($\prod^{\otimes i} \rho^i_{T'}$) have the same average energy. Since the product of the marginals is the thermal state of the composite system at temperature $T'$, the maximum entropy principle implies that $\sum_iS(\rho^i_{T'})\geq S(\rho_f^D)$. Hence, $C_r|_{I^{\Delta E}_{max}}\leq I^{\Delta E}_{max}$. Therefore, when one aims for maximal correlation creation the coherence created is always bounded by the amount of correlation created. Now, we ask the converse, i.e., how much correlation can be created when one creates maximal coherence by a unitary operation with the same energy constraint $\Delta E$? The maximal coherence that can be created in this scenario by unitary transformation 
with energy constraint is given by
\begin{equation}
\label{max-coh}
C^{\Delta E}_{r,max}= \sum_i\left[S(\rho^i_{T'})-S(\rho^i_{T})\right].
\end{equation}
Note that the maximal achievable coherence is equal to the maximal achievable correlation (cf. Eq. (\ref{max-cor})), but the protocols to achieve them are completely different. When the maximal coherence is created, the diagonal of the final density matrix is a thermal state at some higher temperature while the maximal correlation is created when the product of the marginals of the final 
state is a thermal state at some higher temperature. Therefore, when the maximum amount of coherence $C^{\Delta E}_{r,max}$  is created, the correlation $I|_{C^{\Delta E}_{r,max}}$ that is created simultaneously always satisfies
\begin{equation}
 I|_{C^{\Delta E}_{r,max}}\leq C^{\Delta E}_{r,max}.
\end{equation}
The above equation again follows from the maximum entropy principle and the fact that the diagonal part and the product of the marginals have same average energy. Therefore, when one aims for maximal coherence creation, the amount of correlation that can be created at the same time is always bounded by the maximal coherence created and vice versa.

It is also interesting to inquire whether one can create maximal coherence and correlation simultaneously. In the following, we partially answer this question. For two qubit systems we show that there does not exist any unitary which maximizes both the coherence and correlation, simultaneously. Let the Hamiltonian of the two qubit system be given by $H_{AB}=H_A+H_B$ with $H_A\neq H_B$, in general. Later, we also consider $H_A=H_B$. The initial state is the thermal state at temperature $T$ 
and given by
\begin{equation}
\rho_{AB,T}= \mbox{diag}\{pq,p(1-q),(1-p)q,(1-p)(1-q)\},
\end{equation}
where $p=1/(1+e^{-\beta E_{A}})$, $q=1/(1+e^{-\beta E_{B}})$, $H_A=E_A \ketbra{1}{1}$ and $H_B=E_B \ketbra{1}{1}$. Consider the protocol of Ref. \cite{Huber2014} to create the maximum correlation. In that scenario, the marginals are the thermal states at a higher temperature $T'$. Let the final state of the two qubit system after the unitary transformation is given by
\begin{align}
\rho_{AB}^f= \sum_{ijkl} a_{ijkl}|i\rangle\langle j|\otimes|k\rangle \langle l|. 
\end{align}
As the marginals are thermal, $a_{iikl}=0$ if $k\neq l$ and $a_{ijkk}=0$ if $i\neq j$. Thus, the maximally correlated state that is created by investing a limited amount of energy is an $X-$state \footnote{$X-$states are a special class of states that have been analyzed in great detail in context of analytical calculations of quantum discord \cite{Henderson2001, Ollivier2001} among others. The term $X-$states has been coined in Ref. \cite{Yu2007} for their visual appearance. For a bipartite qubit quantum systems, the states $ \rho_X$ of the form 
\begin{align*}
 \rho_X:=\left (  \begin{array}{cccc}
                 \rho_{00} &0 &0 & \rho_{03}\\
                0 & \rho_{11} & \rho_{12} & 0\\
                0 &  \rho_{21} & \rho_{22} &0\\
                 \rho_{30} &0 &0 & \rho_{33}
                 \end{array} \right)
\end{align*}
are called $X-$states. In general, any density matrix that has nonzero elements only at the diagonals and anti-diagonals is called an $X-$ state. For a detailed exposition of $X-$states see Ref. \cite{Rau2009}.}. While for maximal coherence creation, the diagonal part of the final state is a thermal state at higher temperature $T'$. Therefore, the diagonal part of the final state must be of the form
\begin{align}
 \rho^{fD}_{AB,T'}= \mbox{diag}\{p'q',p'(1-q'),(1-p')q',(1-p')(1-q')\},\nonumber
\end{align}
where $p'=1/(1+e^{-\beta' E_{A}})$, $q'=1/(1+e^{-\beta' E_{B}})$, $p'<p$ and $q'<q$ as $\beta'<\beta$. We show in \cref{{appendix-some-proof}}, separately for (i)$E_A=E_B$ and (ii)$E_A\neq E_B$, that there is no 
such unitary transformation which serves the purpose. It will be interesting to explore what happens for higher dimensional systems. 

\section{Conclusion}
\label{sec4}
In this article, we have studied the creation of quantum coherence by unitary transformations starting from a thermal state. This is important from practical view point, as most of the systems interact with 
environment and get thermalized eventually. We find the maximal amount of coherence that can be created from a thermal state at a given temperature and find a protocol to achieve this. Moreover, we  find the amount of coherence that can be created with limited available energy. Thus, our study establishes a link between coherence and thermodynamic resource theories and reveals the limitations 
imposed by thermodynamics on the processing of the coherence. Additionally, we have performed a comparative study between the coherence creation and total correlation creation with the same amount of energy at our disposal. We show that when one creates the maximum coherence with limited energy, the total correlation created in the process is always upper bounded by the amount of coherence created and vice versa. As correlation and coherence both are useful resources, processing them simultaneously is fruitful. However, our result shows that, at least in two qubit systems, there is no way to create the maximal coherence and correlation simultaneously via unitary transformations.
Recently, the importance of coherence in improving the performance of thermal machines has been explicitly established and the implications of coherence on the thermodynamic behavior of quantum systems have been studied. Therefore, it is justified to believe that the study of the thermodynamic cost and limitations of thermodynamic laws on the processing of quantum coherence can be far reaching. The results in this paper are a step forward in this direction. 

\smallskip
\noindent
\begin{acknowledgments}
A.M., U.S. and S.B.  acknowledge the research fellowship of Department of Atomic Energy, Government of India.
\end{acknowledgments}

\appendix
\setcounter{equation}{0}
\section{Coherence vs Correlation in two qubit systems}
\label{appendix-some-proof}
Here, for two qubit states with the Hamiltonian $H=H_A+H_B$, we show that maximum correlation and maximum coherence cannot be created simultaneously via a unitary transformation, starting from a thermal state with limited available energy. We consider both the cases where $E_A=E_B$ and 
$E_A\neq E_B$.
\subsection{\texorpdfstring{$\bf{E_A=E_B}$}{}}
For the case where the initial state is $\rho_T^{\otimes2}$ with $\rho_T=\rho^{A}_T=\rho^{B}_T=\mbox{diag}\{ p,1-p \}$ and the final state is in the $X-$state form, given by
\begin{align}
\rho_f=\left(\begin{array}{cccc}
             q^2 & 0 & 0 & Y\\
             0 & q(1-q) &X & 0\\
             0 & X^* & q(1-q) & 0\\
             Y^* & 0 & 0 & (1-q)^2
\end{array}\right).
\end{align}
Note that $p\geq q\geq 1/2 \geq (1-q)\geq (1-p)$. Here, $|Y| \leq q(1-q)$ and $|X| \leq q(1-q)$ so that, $\rho_f$ is positive semi-definite. Let $p=\frac{1}{2}+\epsilon$ and $q=\frac{1}{2}+\epsilon'$, where $\frac{1}{2}>\epsilon>\epsilon'>0$. The eigenvalues of this final density matrix are given by
\begin{eqnarray}
&&\lambda_{1,4}=\frac{1}{2}\left( q^2+(1-q)^2 \pm \sqrt{(q^2-(1-q)^2)^2+4|Y|^2} 
\right),\\
&&\lambda_{2,3}=~~~~q(1-q)\pm |X|.
\end{eqnarray}
As the unitary transformation preserves the eigenvalues, two of the eigenvalues  of the final density matrix must be equal to $p(1-p)$ and and the other two must be equal to $p^2$ and $(1-p)^2$ respectively.

\smallskip
\noindent 
{\bf Case 1:} Let us first assume $\lambda_2=\lambda_3=p(1-p)$. Then we find that $|X|=0$ and $q=p~ \mbox{or}~ q=1-p$. Since we know $p\geq1/2$, then $q\leq1/2$ for $q=1-p$. Hence, $q\neq1-p$. $q=p$ can only happen under identity operation. Therefore, $\lambda_2\neq \lambda_3$.

\smallskip
\noindent 
{\bf Case 2:} Assume $\lambda_1=\lambda_4=p(1-p)$, then we have
\begin{eqnarray}
\label{contra}
p(1-p)&&=\frac{q^2+(1-q)^2}{2}+\frac{q^2-(1-q)^2}{2}M\nonumber\\
&&=\frac{q^2+(1-q)^2}{2}-\frac{q^2-(1-q)^2}{2}M,
\end{eqnarray}
where $M=\sqrt{1+\frac{4|Y|^2}{(2q-1)^2}}$. From Eq. (\ref{contra}), we have $M=0$ which is a contradiction since $M\geq1$. Therefore, Eq. (\ref{contra}) cannot be satisfied. 

\smallskip
\noindent 
{\bf Case 3:} As $p^2\geq p(1-p) \geq (1-p)^2$, other two possibilities are $\lambda_1=\lambda_3=p(1-p)$ or $\lambda_4=\lambda_2=p(1-p)$. Note that we always have $\lambda_1>\lambda_3$. Therefore, the only possibility we have to check is $\lambda_4=\lambda_2=p(1-p)$. For that we have
\begin{eqnarray}
\lambda_2=p(1-p)&&\Rightarrow |X|=p(1-p)-q(1-q)\nonumber\\
&&\Rightarrow |X|=-(\epsilon^2-\epsilon^{'2}),
\end{eqnarray}
which is a contradiction as the R.H.S. is negative since $\epsilon>\epsilon'$. Therefore, it is also not possible.

\subsection{\texorpdfstring{$\bf{E_A \neq E_B}$}{}}
Let us relabel the diagonal entries of the initial density matrix as
\begin{align}
\rho_T^{AB}=a_1|00\rangle\langle 00|+a_2|01\rangle\langle 
01|+a_3|10\rangle\langle10|+a_4|11\rangle\langle11|.\nonumber
\end{align}
Here, $\{a_i\}$ is an arbitrary probability distribution that depends on the energy levels $E_A, E_B$ and the initial temperature $T$. We argue that the unitary transformations that map the initial state into an $X-$state starting from a two qubit thermal state at arbitrary finite temperature $T$, are only allowed to create correlation among the subspaces spanned by $\{|00\rangle, |11\rangle\}$ and $\{|01\rangle, 
|10\rangle\}$, separately, i.e., no correlation can be created between these two subspaces. Thus, the unitary transformation that maximizes the total correlation acts on the blocks spanned by $\{|00\rangle, |11\rangle\}$ and $\{|01\rangle, |10\rangle\}$, separately. Given this, again from comparing eigenvalues, it can be argued that total correlation and coherence cannot be maximized simultaneously by unitary transformations in two qubit systems when the Hamiltonian of the systems are not the same.

\bibliographystyle{apsrev4-1}
\bibliography{lit-therm-coh}

\begin{thebibliography}{77}%
\makeatletter
\providecommand \@ifxundefined [1]{%
 \@ifx{#1\undefined}
}%
\providecommand \@ifnum [1]{%
 \ifnum #1\expandafter \@firstoftwo
 \else \expandafter \@secondoftwo
 \fi
}%
\providecommand \@ifx [1]{%
 \ifx #1\expandafter \@firstoftwo
 \else \expandafter \@secondoftwo
 \fi
}%
\providecommand \natexlab [1]{#1}%
\providecommand \enquote  [1]{``#1''}%
\providecommand \bibnamefont  [1]{#1}%
\providecommand \bibfnamefont [1]{#1}%
\providecommand \citenamefont [1]{#1}%
\providecommand \href@noop [0]{\@secondoftwo}%
\providecommand \href [0]{\begingroup \@sanitize@url \@href}%
\providecommand \@href[1]{\@@startlink{#1}\@@href}%
\providecommand \@@href[1]{\endgroup#1\@@endlink}%
\providecommand \@sanitize@url [0]{\catcode `\\12\catcode `\$12\catcode
  `\&12\catcode `\#12\catcode `\^12\catcode `\_12\catcode `\%12\relax}%
\providecommand \@@startlink[1]{}%
\providecommand \@@endlink[0]{}%
\providecommand \url  [0]{\begingroup\@sanitize@url \@url }%
\providecommand \@url [1]{\endgroup\@href {#1}{\urlprefix }}%
\providecommand \urlprefix  [0]{URL }%
\providecommand \Eprint [0]{\href }%
\providecommand \doibase [0]{http://dx.doi.org/}%
\providecommand \selectlanguage [0]{\@gobble}%
\providecommand \bibinfo  [0]{\@secondoftwo}%
\providecommand \bibfield  [0]{\@secondoftwo}%
\providecommand \translation [1]{[#1]}%
\providecommand \BibitemOpen [0]{}%
\providecommand \bibitemStop [0]{}%
\providecommand \bibitemNoStop [0]{.\EOS\space}%
\providecommand \EOS [0]{\spacefactor3000\relax}%
\providecommand \BibitemShut  [1]{\csname bibitem#1\endcsname}%
\let\auto@bib@innerbib\@empty
\bibitem [{\citenamefont {Horodecki}\ \emph {et~al.}(2009)\citenamefont
  {Horodecki}, \citenamefont {Horodecki}, \citenamefont {Horodecki},\ and\
  \citenamefont {Horodecki}}]{entanglement-rmp}%
  \BibitemOpen
  \bibfield  {author} {\bibinfo {author} {\bibfnamefont {R.}~\bibnamefont
  {Horodecki}}, \bibinfo {author} {\bibfnamefont {P.}~\bibnamefont
  {Horodecki}}, \bibinfo {author} {\bibfnamefont {M.}~\bibnamefont
  {Horodecki}}, \ and\ \bibinfo {author} {\bibfnamefont {K.}~\bibnamefont
  {Horodecki}},\ }\href {\doibase 10.1103/RevModPhys.81.865} {\bibfield
  {journal} {\bibinfo  {journal} {Rev. Mod. Phys.}\ }\textbf {\bibinfo {volume}
  {81}},\ \bibinfo {pages} {865} (\bibinfo {year} {2009})}\BibitemShut
  {NoStop}%
\bibitem [{\citenamefont {\AA{}berg}()}]{Aberg2006}%
  \BibitemOpen
  \bibfield  {author} {\bibinfo {author} {\bibfnamefont {J.}~\bibnamefont
  {\AA{}berg}},\ }\href@noop {} {}\Eprint
  {http://arxiv.org/abs/quant-ph/0612146} {quant-ph/0612146} \BibitemShut
  {NoStop}%
\bibitem [{\citenamefont {Baumgratz}\ \emph {et~al.}(2014)\citenamefont
  {Baumgratz}, \citenamefont {Cramer},\ and\ \citenamefont
  {Plenio}}]{Baumgratz2014}%
  \BibitemOpen
  \bibfield  {author} {\bibinfo {author} {\bibfnamefont {T.}~\bibnamefont
  {Baumgratz}}, \bibinfo {author} {\bibfnamefont {M.}~\bibnamefont {Cramer}}, \
  and\ \bibinfo {author} {\bibfnamefont {M.~B.}\ \bibnamefont {Plenio}},\
  }\href {\doibase 10.1103/PhysRevLett.113.140401} {\bibfield  {journal}
  {\bibinfo  {journal} {Phys. Rev. Lett.}\ }\textbf {\bibinfo {volume} {113}},\
  \bibinfo {pages} {140401} (\bibinfo {year} {2014})}\BibitemShut {NoStop}%
\bibitem [{\citenamefont {Marvian}\ and\ \citenamefont
  {Spekkens}(2014)}]{Marvian14}%
  \BibitemOpen
  \bibfield  {author} {\bibinfo {author} {\bibfnamefont {I.}~\bibnamefont
  {Marvian}}\ and\ \bibinfo {author} {\bibfnamefont {R.~W.}\ \bibnamefont
  {Spekkens}},\ }\href {\doibase 10.1103/PhysRevA.90.062110} {\bibfield
  {journal} {\bibinfo  {journal} {Phys. Rev. A}\ }\textbf {\bibinfo {volume}
  {90}},\ \bibinfo {pages} {062110} (\bibinfo {year} {2014})}\BibitemShut
  {NoStop}%
\bibitem [{\citenamefont {Gour}\ and\ \citenamefont
  {Spekkens}(2008)}]{Gour2008}%
  \BibitemOpen
  \bibfield  {author} {\bibinfo {author} {\bibfnamefont {G.}~\bibnamefont
  {Gour}}\ and\ \bibinfo {author} {\bibfnamefont {R.~W.}\ \bibnamefont
  {Spekkens}},\ }\href {http://stacks.iop.org/1367-2630/10/i=3/a=033023}
  {\bibfield  {journal} {\bibinfo  {journal} {New J. Phys.}\ }\textbf {\bibinfo
  {volume} {10}},\ \bibinfo {pages} {033023} (\bibinfo {year}
  {2008})}\BibitemShut {NoStop}%
\bibitem [{\citenamefont {Alicki}\ and\ \citenamefont
  {Fannes}(2013)}]{Alicki2013}%
  \BibitemOpen
  \bibfield  {author} {\bibinfo {author} {\bibfnamefont {R.}~\bibnamefont
  {Alicki}}\ and\ \bibinfo {author} {\bibfnamefont {M.}~\bibnamefont
  {Fannes}},\ }\href {\doibase 10.1103/PhysRevE.87.042123} {\bibfield
  {journal} {\bibinfo  {journal} {Phys. Rev. E}\ }\textbf {\bibinfo {volume}
  {87}},\ \bibinfo {pages} {042123} (\bibinfo {year} {2013})}\BibitemShut
  {NoStop}%
\bibitem [{\citenamefont {Hovhannisyan}\ \emph {et~al.}(2013)\citenamefont
  {Hovhannisyan}, \citenamefont {Perarnau-Llobet}, \citenamefont {Huber},\ and\
  \citenamefont {Ac\'{i}n}}]{Hovhannisyan2013}%
  \BibitemOpen
  \bibfield  {author} {\bibinfo {author} {\bibfnamefont {K.~V.}\ \bibnamefont
  {Hovhannisyan}}, \bibinfo {author} {\bibfnamefont {M.}~\bibnamefont
  {Perarnau-Llobet}}, \bibinfo {author} {\bibfnamefont {M.}~\bibnamefont
  {Huber}}, \ and\ \bibinfo {author} {\bibfnamefont {A.}~\bibnamefont
  {Ac\'{i}n}},\ }\href {\doibase 10.1103/PhysRevLett.111.240401} {\bibfield
  {journal} {\bibinfo  {journal} {Phys. Rev. Lett.}\ }\textbf {\bibinfo
  {volume} {111}},\ \bibinfo {pages} {240401} (\bibinfo {year}
  {2013})}\BibitemShut {NoStop}%
\bibitem [{\citenamefont {Brunner}\ \emph {et~al.}(2014)\citenamefont
  {Brunner}, \citenamefont {Huber}, \citenamefont {Linden}, \citenamefont
  {Popescu}, \citenamefont {Silva},\ and\ \citenamefont
  {Skrzypczyk}}]{Brunner2014}%
  \BibitemOpen
  \bibfield  {author} {\bibinfo {author} {\bibfnamefont {N.}~\bibnamefont
  {Brunner}}, \bibinfo {author} {\bibfnamefont {M.}~\bibnamefont {Huber}},
  \bibinfo {author} {\bibfnamefont {N.}~\bibnamefont {Linden}}, \bibinfo
  {author} {\bibfnamefont {S.}~\bibnamefont {Popescu}}, \bibinfo {author}
  {\bibfnamefont {R.}~\bibnamefont {Silva}}, \ and\ \bibinfo {author}
  {\bibfnamefont {P.}~\bibnamefont {Skrzypczyk}},\ }\href {\doibase
  10.1103/PhysRevE.89.032115} {\bibfield  {journal} {\bibinfo  {journal} {Phys.
  Rev. E}\ }\textbf {\bibinfo {volume} {89}},\ \bibinfo {pages} {032115}
  (\bibinfo {year} {2014})}\BibitemShut {NoStop}%
\bibitem [{\citenamefont {Ro\ss{}nagel}\ \emph {et~al.}(2014)\citenamefont
  {Ro\ss{}nagel}, \citenamefont {Abah}, \citenamefont {Schmidt-Kaler},
  \citenamefont {Singer},\ and\ \citenamefont {Lutz}}]{Abah2014}%
  \BibitemOpen
  \bibfield  {author} {\bibinfo {author} {\bibfnamefont {J.}~\bibnamefont
  {Ro\ss{}nagel}}, \bibinfo {author} {\bibfnamefont {O.}~\bibnamefont {Abah}},
  \bibinfo {author} {\bibfnamefont {F.}~\bibnamefont {Schmidt-Kaler}}, \bibinfo
  {author} {\bibfnamefont {K.}~\bibnamefont {Singer}}, \ and\ \bibinfo {author}
  {\bibfnamefont {E.}~\bibnamefont {Lutz}},\ }\href {\doibase
  10.1103/PhysRevLett.112.030602} {\bibfield  {journal} {\bibinfo  {journal}
  {Phys. Rev. Lett.}\ }\textbf {\bibinfo {volume} {112}},\ \bibinfo {pages}
  {030602} (\bibinfo {year} {2014})}\BibitemShut {NoStop}%
\bibitem [{\citenamefont {Correa}\ \emph {et~al.}(2014)\citenamefont {Correa},
  \citenamefont {Palao}, \citenamefont {Alonso},\ and\ \citenamefont
  {Adesso}}]{Correa2014}%
  \BibitemOpen
  \bibfield  {author} {\bibinfo {author} {\bibfnamefont {L.~A.}\ \bibnamefont
  {Correa}}, \bibinfo {author} {\bibfnamefont {J.~P.}\ \bibnamefont {Palao}},
  \bibinfo {author} {\bibfnamefont {D.}~\bibnamefont {Alonso}}, \ and\ \bibinfo
  {author} {\bibfnamefont {G.}~\bibnamefont {Adesso}},\ }\href
  {http://dx.doi.org/10.1038/srep03949} {\bibfield  {journal} {\bibinfo
  {journal} {Sci. Rep.}\ }\textbf {\bibinfo {volume} {4}} (\bibinfo {year}
  {2014})},\ \bibinfo {note} {article}\BibitemShut {NoStop}%
\bibitem [{\citenamefont {Gallego}\ \emph {et~al.}(2014)\citenamefont
  {Gallego}, \citenamefont {Riera},\ and\ \citenamefont
  {Eisert}}]{Gallego2014}%
  \BibitemOpen
  \bibfield  {author} {\bibinfo {author} {\bibfnamefont {R.}~\bibnamefont
  {Gallego}}, \bibinfo {author} {\bibfnamefont {A.}~\bibnamefont {Riera}}, \
  and\ \bibinfo {author} {\bibfnamefont {J.}~\bibnamefont {Eisert}},\ }\href
  {http://stacks.iop.org/1367-2630/16/i=12/a=125009} {\bibfield  {journal}
  {\bibinfo  {journal} {New J. Phys.}\ }\textbf {\bibinfo {volume} {16}},\
  \bibinfo {pages} {125009} (\bibinfo {year} {2014})}\BibitemShut {NoStop}%
\bibitem [{\citenamefont {Galve}\ and\ \citenamefont {Lutz}(2009)}]{Galve2014}%
  \BibitemOpen
  \bibfield  {author} {\bibinfo {author} {\bibfnamefont {F.}~\bibnamefont
  {Galve}}\ and\ \bibinfo {author} {\bibfnamefont {E.}~\bibnamefont {Lutz}},\
  }\href {\doibase 10.1103/PhysRevA.79.032327} {\bibfield  {journal} {\bibinfo
  {journal} {Phys. Rev. A}\ }\textbf {\bibinfo {volume} {79}},\ \bibinfo
  {pages} {032327} (\bibinfo {year} {2009})}\BibitemShut {NoStop}%
\bibitem [{\citenamefont {{Carlisle}}\ \emph {et~al.}()\citenamefont
  {{Carlisle}}, \citenamefont {{Mazzola}}, \citenamefont {{Campisi}},
  \citenamefont {{Goold}}, \citenamefont {{Semi{\~a}o}}, \citenamefont
  {{Ferraro}}, \citenamefont {{Plastina}}, \citenamefont {{Vedral}},
  \citenamefont {{De Chiara}},\ and\ \citenamefont
  {{Paternostro}}}]{Carlisle2014}%
  \BibitemOpen
  \bibfield  {author} {\bibinfo {author} {\bibfnamefont {A.}~\bibnamefont
  {{Carlisle}}}, \bibinfo {author} {\bibfnamefont {L.}~\bibnamefont
  {{Mazzola}}}, \bibinfo {author} {\bibfnamefont {M.}~\bibnamefont
  {{Campisi}}}, \bibinfo {author} {\bibfnamefont {J.}~\bibnamefont {{Goold}}},
  \bibinfo {author} {\bibfnamefont {F.~L.}\ \bibnamefont {{Semi{\~a}o}}},
  \bibinfo {author} {\bibfnamefont {A.}~\bibnamefont {{Ferraro}}}, \bibinfo
  {author} {\bibfnamefont {F.}~\bibnamefont {{Plastina}}}, \bibinfo {author}
  {\bibfnamefont {V.}~\bibnamefont {{Vedral}}}, \bibinfo {author}
  {\bibfnamefont {G.}~\bibnamefont {{De Chiara}}}, \ and\ \bibinfo {author}
  {\bibfnamefont {M.}~\bibnamefont {{Paternostro}}},\ }\href@noop {} {}\Eprint
  {http://arxiv.org/abs/1403.0629} {arXiv:1403.0629} \BibitemShut {NoStop}%
\bibitem [{\citenamefont {Scarani}\ \emph {et~al.}(2002)\citenamefont
  {Scarani}, \citenamefont {Ziman}, \citenamefont {\ifmmode \check{S}\else
  \v{S}\fi{}telmachovi\ifmmode~\check{c}\else \v{c}\fi{}}, \citenamefont
  {Gisin},\ and\ \citenamefont {Bu\ifmmode~\check{z}\else
  \v{z}\fi{}ek}}]{Scarani2002}%
  \BibitemOpen
  \bibfield  {author} {\bibinfo {author} {\bibfnamefont {V.}~\bibnamefont
  {Scarani}}, \bibinfo {author} {\bibfnamefont {M.}~\bibnamefont {Ziman}},
  \bibinfo {author} {\bibfnamefont {P.}~\bibnamefont {\ifmmode \check{S}\else
  \v{S}\fi{}telmachovi\ifmmode~\check{c}\else \v{c}\fi{}}}, \bibinfo {author}
  {\bibfnamefont {N.}~\bibnamefont {Gisin}}, \ and\ \bibinfo {author}
  {\bibfnamefont {V.}~\bibnamefont {Bu\ifmmode~\check{z}\else \v{z}\fi{}ek}},\
  }\href {\doibase 10.1103/PhysRevLett.88.097905} {\bibfield  {journal}
  {\bibinfo  {journal} {Phys. Rev. Lett.}\ }\textbf {\bibinfo {volume} {88}},\
  \bibinfo {pages} {097905} (\bibinfo {year} {2002})}\BibitemShut {NoStop}%
\bibitem [{\citenamefont {Eisert}\ \emph {et~al.}(2004)\citenamefont {Eisert},
  \citenamefont {Plenio}, \citenamefont {Bose},\ and\ \citenamefont
  {Hartley}}]{Eisert2004}%
  \BibitemOpen
  \bibfield  {author} {\bibinfo {author} {\bibfnamefont {J.}~\bibnamefont
  {Eisert}}, \bibinfo {author} {\bibfnamefont {M.~B.}\ \bibnamefont {Plenio}},
  \bibinfo {author} {\bibfnamefont {S.}~\bibnamefont {Bose}}, \ and\ \bibinfo
  {author} {\bibfnamefont {J.}~\bibnamefont {Hartley}},\ }\href {\doibase
  10.1103/PhysRevLett.93.190402} {\bibfield  {journal} {\bibinfo  {journal}
  {Phys. Rev. Lett.}\ }\textbf {\bibinfo {volume} {93}},\ \bibinfo {pages}
  {190402} (\bibinfo {year} {2004})}\BibitemShut {NoStop}%
\bibitem [{\citenamefont {Huber}\ \emph {et~al.}(2015)\citenamefont {Huber},
  \citenamefont {Perarnau-Llobet}, \citenamefont {Hovhannisyan}, \citenamefont
  {Skrzypczyk}, \citenamefont {Kl\"{o}ckl}, \citenamefont {Brunner},\ and\
  \citenamefont {Ac\'{i}n}}]{Huber2014}%
  \BibitemOpen
  \bibfield  {author} {\bibinfo {author} {\bibfnamefont {M.}~\bibnamefont
  {Huber}}, \bibinfo {author} {\bibfnamefont {M.}~\bibnamefont
  {Perarnau-Llobet}}, \bibinfo {author} {\bibfnamefont {K.~V.}\ \bibnamefont
  {Hovhannisyan}}, \bibinfo {author} {\bibfnamefont {P.}~\bibnamefont
  {Skrzypczyk}}, \bibinfo {author} {\bibfnamefont {C.}~\bibnamefont
  {Kl\"{o}ckl}}, \bibinfo {author} {\bibfnamefont {N.}~\bibnamefont {Brunner}},
  \ and\ \bibinfo {author} {\bibfnamefont {A.}~\bibnamefont {Ac\'{i}n}},\
  }\href {http://stacks.iop.org/1367-2630/17/i=6/a=065008} {\bibfield
  {journal} {\bibinfo  {journal} {New J. Phys.}\ }\textbf {\bibinfo {volume}
  {17}},\ \bibinfo {pages} {065008} (\bibinfo {year} {2015})}\BibitemShut
  {NoStop}%
\bibitem [{\citenamefont {Bruschi}\ \emph {et~al.}(2015)\citenamefont
  {Bruschi}, \citenamefont {Perarnau-Llobet}, \citenamefont {Friis},
  \citenamefont {Hovhannisyan},\ and\ \citenamefont {Huber}}]{Bruschi2015}%
  \BibitemOpen
  \bibfield  {author} {\bibinfo {author} {\bibfnamefont {D.~E.}\ \bibnamefont
  {Bruschi}}, \bibinfo {author} {\bibfnamefont {M.}~\bibnamefont
  {Perarnau-Llobet}}, \bibinfo {author} {\bibfnamefont {N.}~\bibnamefont
  {Friis}}, \bibinfo {author} {\bibfnamefont {K.~V.}\ \bibnamefont
  {Hovhannisyan}}, \ and\ \bibinfo {author} {\bibfnamefont {M.}~\bibnamefont
  {Huber}},\ }\href {\doibase 10.1103/PhysRevE.91.032118} {\bibfield  {journal}
  {\bibinfo  {journal} {Phys. Rev. E}\ }\textbf {\bibinfo {volume} {91}},\
  \bibinfo {pages} {032118} (\bibinfo {year} {2015})}\BibitemShut {NoStop}%
\bibitem [{\citenamefont {{Vinjanampathy}}\ and\ \citenamefont
  {{Anders}}()}]{Anders-review}%
  \BibitemOpen
  \bibfield  {author} {\bibinfo {author} {\bibfnamefont {S.}~\bibnamefont
  {{Vinjanampathy}}}\ and\ \bibinfo {author} {\bibfnamefont {J.}~\bibnamefont
  {{Anders}}},\ }\href@noop {} {}\Eprint {http://arxiv.org/abs/1508.06099}
  {arXiv:1508.06099} \BibitemShut {NoStop}%
\bibitem [{\citenamefont {{Goold}}\ \emph {et~al.}()\citenamefont {{Goold}},
  \citenamefont {{Huber}}, \citenamefont {{Riera}}, \citenamefont {{del Rio}},\
  and\ \citenamefont {{Skrzypczyk}}}]{Huber-review}%
  \BibitemOpen
  \bibfield  {author} {\bibinfo {author} {\bibfnamefont {J.}~\bibnamefont
  {{Goold}}}, \bibinfo {author} {\bibfnamefont {M.}~\bibnamefont {{Huber}}},
  \bibinfo {author} {\bibfnamefont {A.}~\bibnamefont {{Riera}}}, \bibinfo
  {author} {\bibfnamefont {L.}~\bibnamefont {{del Rio}}}, \ and\ \bibinfo
  {author} {\bibfnamefont {P.}~\bibnamefont {{Skrzypczyk}}},\ }\href@noop {}
  {}\Eprint {http://arxiv.org/abs/1505.07835} {arXiv:1505.07835} \BibitemShut
  {NoStop}%
\bibitem [{\citenamefont {{Kammerlander}}\ and\ \citenamefont
  {{Anders}}()}]{Anders2015}%
  \BibitemOpen
  \bibfield  {author} {\bibinfo {author} {\bibfnamefont {P.}~\bibnamefont
  {{Kammerlander}}}\ and\ \bibinfo {author} {\bibfnamefont {J.}~\bibnamefont
  {{Anders}}},\ }\href@noop {} {}\Eprint {http://arxiv.org/abs/1502.02673}
  {arXiv:1502.02673} \BibitemShut {NoStop}%
\bibitem [{\citenamefont {Perarnau-Llobet}\ \emph {et~al.}(2015)\citenamefont
  {Perarnau-Llobet}, \citenamefont {Hovhannisyan}, \citenamefont {Huber},
  \citenamefont {Skrzypczyk}, \citenamefont {Brunner},\ and\ \citenamefont
  {Ac\'{\i}n}}]{Llobet2015}%
  \BibitemOpen
  \bibfield  {author} {\bibinfo {author} {\bibfnamefont {M.}~\bibnamefont
  {Perarnau-Llobet}}, \bibinfo {author} {\bibfnamefont {K.~V.}\ \bibnamefont
  {Hovhannisyan}}, \bibinfo {author} {\bibfnamefont {M.}~\bibnamefont {Huber}},
  \bibinfo {author} {\bibfnamefont {P.}~\bibnamefont {Skrzypczyk}}, \bibinfo
  {author} {\bibfnamefont {N.}~\bibnamefont {Brunner}}, \ and\ \bibinfo
  {author} {\bibfnamefont {A.}~\bibnamefont {Ac\'{\i}n}},\ }\href {\doibase
  10.1103/PhysRevX.5.041011} {\bibfield  {journal} {\bibinfo  {journal} {Phys.
  Rev. X}\ }\textbf {\bibinfo {volume} {5}},\ \bibinfo {pages} {041011}
  (\bibinfo {year} {2015})}\BibitemShut {NoStop}%
\bibitem [{\citenamefont {Korzekwa}\ \emph {et~al.}(2016)\citenamefont
  {Korzekwa}, \citenamefont {Lostaglio}, \citenamefont {Oppenheim},\ and\
  \citenamefont {Jennings}}]{Korzekwa2016}%
  \BibitemOpen
  \bibfield  {author} {\bibinfo {author} {\bibfnamefont {K.}~\bibnamefont
  {Korzekwa}}, \bibinfo {author} {\bibfnamefont {M.}~\bibnamefont {Lostaglio}},
  \bibinfo {author} {\bibfnamefont {J.}~\bibnamefont {Oppenheim}}, \ and\
  \bibinfo {author} {\bibfnamefont {D.}~\bibnamefont {Jennings}},\ }\href
  {http://stacks.iop.org/1367-2630/18/i=2/a=023045} {\bibfield  {journal}
  {\bibinfo  {journal} {New J. Phys.}\ }\textbf {\bibinfo {volume} {18}},\
  \bibinfo {pages} {023045} (\bibinfo {year} {2016})}\BibitemShut {NoStop}%
\bibitem [{\citenamefont {Lostaglio}\ \emph
  {et~al.}(2015{\natexlab{a}})\citenamefont {Lostaglio}, \citenamefont
  {Korzekwa}, \citenamefont {Jennings},\ and\ \citenamefont
  {Rudolph}}]{Rudolph114}%
  \BibitemOpen
  \bibfield  {author} {\bibinfo {author} {\bibfnamefont {M.}~\bibnamefont
  {Lostaglio}}, \bibinfo {author} {\bibfnamefont {K.}~\bibnamefont {Korzekwa}},
  \bibinfo {author} {\bibfnamefont {D.}~\bibnamefont {Jennings}}, \ and\
  \bibinfo {author} {\bibfnamefont {T.}~\bibnamefont {Rudolph}},\ }\href
  {\doibase 10.1103/PhysRevX.5.021001} {\bibfield  {journal} {\bibinfo
  {journal} {Phys. Rev. X}\ }\textbf {\bibinfo {volume} {5}},\ \bibinfo {pages}
  {021001} (\bibinfo {year} {2015}{\natexlab{a}})}\BibitemShut {NoStop}%
\bibitem [{\citenamefont {Lostaglio}\ \emph
  {et~al.}(2015{\natexlab{b}})\citenamefont {Lostaglio}, \citenamefont
  {Jennings},\ and\ \citenamefont {Rudolph}}]{Rudolph214}%
  \BibitemOpen
  \bibfield  {author} {\bibinfo {author} {\bibfnamefont {M.}~\bibnamefont
  {Lostaglio}}, \bibinfo {author} {\bibfnamefont {D.}~\bibnamefont {Jennings}},
  \ and\ \bibinfo {author} {\bibfnamefont {T.}~\bibnamefont {Rudolph}},\ }\href
  {http://dx.doi.org/10.1038/ncomms7383} {\bibfield  {journal} {\bibinfo
  {journal} {Nat. Commun.}\ }\textbf {\bibinfo {volume} {6}},\ \bibinfo {pages}
  {6383} (\bibinfo {year} {2015}{\natexlab{b}})}\BibitemShut {NoStop}%
\bibitem [{\citenamefont {Maruyama}\ \emph {et~al.}(2009)\citenamefont
  {Maruyama}, \citenamefont {Nori},\ and\ \citenamefont
  {Vedral}}]{Maruyama2009}%
  \BibitemOpen
  \bibfield  {author} {\bibinfo {author} {\bibfnamefont {K.}~\bibnamefont
  {Maruyama}}, \bibinfo {author} {\bibfnamefont {F.}~\bibnamefont {Nori}}, \
  and\ \bibinfo {author} {\bibfnamefont {V.}~\bibnamefont {Vedral}},\ }\href
  {\doibase 10.1103/RevModPhys.81.1} {\bibfield  {journal} {\bibinfo  {journal}
  {Rev. Mod. Phys.}\ }\textbf {\bibinfo {volume} {81}},\ \bibinfo {pages} {1}
  (\bibinfo {year} {2009})}\BibitemShut {NoStop}%
\bibitem [{\citenamefont {Rio}\ \emph {et~al.}(2011)\citenamefont {Rio},
  \citenamefont {Aberg}, \citenamefont {Renner}, \citenamefont {Dahlsten},\
  and\ \citenamefont {Vedral}}]{Rio2011}%
  \BibitemOpen
  \bibfield  {author} {\bibinfo {author} {\bibfnamefont {L.~d.}\ \bibnamefont
  {Rio}}, \bibinfo {author} {\bibfnamefont {J.}~\bibnamefont {Aberg}}, \bibinfo
  {author} {\bibfnamefont {R.}~\bibnamefont {Renner}}, \bibinfo {author}
  {\bibfnamefont {O.}~\bibnamefont {Dahlsten}}, \ and\ \bibinfo {author}
  {\bibfnamefont {V.}~\bibnamefont {Vedral}},\ }\href {\doibase
  10.1038/nature10123} {\bibfield  {journal} {\bibinfo  {journal} {Nature}\
  }\textbf {\bibinfo {volume} {474}},\ \bibinfo {pages} {61} (\bibinfo {year}
  {2011})}\BibitemShut {NoStop}%
\bibitem [{\citenamefont {Brand\~ao}\ \emph {et~al.}(2013)\citenamefont
  {Brand\~ao}, \citenamefont {Horodecki}, \citenamefont {Oppenheim},
  \citenamefont {Renes},\ and\ \citenamefont {Spekkens}}]{Fernando2013}%
  \BibitemOpen
  \bibfield  {author} {\bibinfo {author} {\bibfnamefont {F.~G. S.~L.}\
  \bibnamefont {Brand\~ao}}, \bibinfo {author} {\bibfnamefont {M.}~\bibnamefont
  {Horodecki}}, \bibinfo {author} {\bibfnamefont {J.}~\bibnamefont
  {Oppenheim}}, \bibinfo {author} {\bibfnamefont {J.~M.}\ \bibnamefont
  {Renes}}, \ and\ \bibinfo {author} {\bibfnamefont {R.~W.}\ \bibnamefont
  {Spekkens}},\ }\href {\doibase 10.1103/PhysRevLett.111.250404} {\bibfield
  {journal} {\bibinfo  {journal} {Phys. Rev. Lett.}\ }\textbf {\bibinfo
  {volume} {111}},\ \bibinfo {pages} {250404} (\bibinfo {year}
  {2013})}\BibitemShut {NoStop}%
\bibitem [{\citenamefont {Horodecki}\ and\ \citenamefont
  {Oppenheim}(2013)}]{HorodeckiI2013}%
  \BibitemOpen
  \bibfield  {author} {\bibinfo {author} {\bibfnamefont {M.}~\bibnamefont
  {Horodecki}}\ and\ \bibinfo {author} {\bibfnamefont {J.}~\bibnamefont
  {Oppenheim}},\ }\href {\doibase 10.1142/S0217979213450197} {\bibfield
  {journal} {\bibinfo  {journal} {Int. J. Mod. Phys. B}\ }\textbf {\bibinfo
  {volume} {27}},\ \bibinfo {pages} {1345019} (\bibinfo {year}
  {2013})}\BibitemShut {NoStop}%
\bibitem [{\citenamefont {Brand\~ao}\ \emph {et~al.}(2015)\citenamefont
  {Brand\~ao}, \citenamefont {Horodecki}, \citenamefont {Ng}, \citenamefont
  {Oppenheim},\ and\ \citenamefont {Wehner}}]{Brandao2015}%
  \BibitemOpen
  \bibfield  {author} {\bibinfo {author} {\bibfnamefont {F.}~\bibnamefont
  {Brand\~ao}}, \bibinfo {author} {\bibfnamefont {M.}~\bibnamefont
  {Horodecki}}, \bibinfo {author} {\bibfnamefont {N.}~\bibnamefont {Ng}},
  \bibinfo {author} {\bibfnamefont {J.}~\bibnamefont {Oppenheim}}, \ and\
  \bibinfo {author} {\bibfnamefont {S.}~\bibnamefont {Wehner}},\ }\href
  {\doibase 10.1073/pnas.1411728112} {\bibfield  {journal} {\bibinfo  {journal}
  {Proc. Natl. Acad. Sci. U.S.A.}\ }\textbf {\bibinfo {volume} {112}},\
  \bibinfo {pages} {3275} (\bibinfo {year} {2015})}\BibitemShut {NoStop}%
\bibitem [{\citenamefont {Engel}\ \emph {et~al.}(2007)\citenamefont {Engel},
  \citenamefont {Calhoun}, \citenamefont {Read}, \citenamefont {Ahn},
  \citenamefont {Mancal}, \citenamefont {Cheng}, \citenamefont {Blankenship},\
  and\ \citenamefont {Fleming}}]{Engel2007}%
  \BibitemOpen
  \bibfield  {author} {\bibinfo {author} {\bibfnamefont {G.~S.}\ \bibnamefont
  {Engel}}, \bibinfo {author} {\bibfnamefont {T.~R.}\ \bibnamefont {Calhoun}},
  \bibinfo {author} {\bibfnamefont {E.~L.}\ \bibnamefont {Read}}, \bibinfo
  {author} {\bibfnamefont {T.-K.}\ \bibnamefont {Ahn}}, \bibinfo {author}
  {\bibfnamefont {T.}~\bibnamefont {Mancal}}, \bibinfo {author} {\bibfnamefont
  {Y.-C.}\ \bibnamefont {Cheng}}, \bibinfo {author} {\bibfnamefont {R.~E.}\
  \bibnamefont {Blankenship}}, \ and\ \bibinfo {author} {\bibfnamefont {G.~R.}\
  \bibnamefont {Fleming}},\ }\href {\doibase 10.1038/nature05678} {\bibfield
  {journal} {\bibinfo  {journal} {Nature}\ }\textbf {\bibinfo {volume} {446}},\
  \bibinfo {pages} {782} (\bibinfo {year} {2007})}\BibitemShut {NoStop}%
\bibitem [{\citenamefont {Lloyd}(2011)}]{Lloyd2011}%
  \BibitemOpen
  \bibfield  {author} {\bibinfo {author} {\bibfnamefont {S.}~\bibnamefont
  {Lloyd}},\ }\href {http://stacks.iop.org/1742-6596/302/i=1/a=012037}
  {\bibfield  {journal} {\bibinfo  {journal} {J. Phys.: Conf. Ser.}\ }\textbf
  {\bibinfo {volume} {302}},\ \bibinfo {pages} {012037} (\bibinfo {year}
  {2011})}\BibitemShut {NoStop}%
\bibitem [{\citenamefont {Collini}\ \emph {et~al.}(2010)\citenamefont
  {Collini}, \citenamefont {Wong}, \citenamefont {Wilk}, \citenamefont {Curmi},
  \citenamefont {Brumer},\ and\ \citenamefont {Scholes}}]{Collini2010}%
  \BibitemOpen
  \bibfield  {author} {\bibinfo {author} {\bibfnamefont {E.}~\bibnamefont
  {Collini}}, \bibinfo {author} {\bibfnamefont {C.~Y.}\ \bibnamefont {Wong}},
  \bibinfo {author} {\bibfnamefont {K.~E.}\ \bibnamefont {Wilk}}, \bibinfo
  {author} {\bibfnamefont {P.~M.~G.}\ \bibnamefont {Curmi}}, \bibinfo {author}
  {\bibfnamefont {P.}~\bibnamefont {Brumer}}, \ and\ \bibinfo {author}
  {\bibfnamefont {G.~D.}\ \bibnamefont {Scholes}},\ }\href {\doibase
  10.1038/nature08811} {\bibfield  {journal} {\bibinfo  {journal} {Nature}\
  }\textbf {\bibinfo {volume} {463}},\ \bibinfo {pages} {644} (\bibinfo {year}
  {2010})}\BibitemShut {NoStop}%
\bibitem [{\citenamefont {Lambert}\ \emph {et~al.}(2013)\citenamefont
  {Lambert}, \citenamefont {Chen}, \citenamefont {Cheng}, \citenamefont {Li},
  \citenamefont {Chen},\ and\ \citenamefont {Nori}}]{Lambert2013}%
  \BibitemOpen
  \bibfield  {author} {\bibinfo {author} {\bibfnamefont {N.}~\bibnamefont
  {Lambert}}, \bibinfo {author} {\bibfnamefont {Y.-N.}\ \bibnamefont {Chen}},
  \bibinfo {author} {\bibfnamefont {Y.-C.}\ \bibnamefont {Cheng}}, \bibinfo
  {author} {\bibfnamefont {C.-M.}\ \bibnamefont {Li}}, \bibinfo {author}
  {\bibfnamefont {G.-Y.}\ \bibnamefont {Chen}}, \ and\ \bibinfo {author}
  {\bibfnamefont {F.}~\bibnamefont {Nori}},\ }\href {\doibase
  10.1038/nphys2474} {\bibfield  {journal} {\bibinfo  {journal} {Nat. Phys.}\
  }\textbf {\bibinfo {volume} {9}},\ \bibinfo {pages} {10} (\bibinfo {year}
  {2013})}\BibitemShut {NoStop}%
\bibitem [{\citenamefont {Chin}\ \emph {et~al.}(2012)\citenamefont {Chin},
  \citenamefont {Huelga},\ and\ \citenamefont {Plenio}}]{Chin2012}%
  \BibitemOpen
  \bibfield  {author} {\bibinfo {author} {\bibfnamefont {A.~W.}\ \bibnamefont
  {Chin}}, \bibinfo {author} {\bibfnamefont {S.~F.}\ \bibnamefont {Huelga}}, \
  and\ \bibinfo {author} {\bibfnamefont {M.~B.}\ \bibnamefont {Plenio}},\
  }\href {\doibase 10.1098/rsta.2011.0224} {\bibfield  {journal} {\bibinfo
  {journal} {Philosophical Transactions of the Royal Society of London A:
  Mathematical, Physical and Engineering Sciences}\ }\textbf {\bibinfo {volume}
  {370}},\ \bibinfo {pages} {3638} (\bibinfo {year} {2012})}\BibitemShut
  {NoStop}%
\bibitem [{\citenamefont {Mitchison}\ \emph {et~al.}(2015)\citenamefont
  {Mitchison}, \citenamefont {Woods}, \citenamefont {Prior},\ and\
  \citenamefont {Huber}}]{Mitchison2015}%
  \BibitemOpen
  \bibfield  {author} {\bibinfo {author} {\bibfnamefont {M.~T.}\ \bibnamefont
  {Mitchison}}, \bibinfo {author} {\bibfnamefont {M.~P.}\ \bibnamefont
  {Woods}}, \bibinfo {author} {\bibfnamefont {J.}~\bibnamefont {Prior}}, \ and\
  \bibinfo {author} {\bibfnamefont {M.}~\bibnamefont {Huber}},\ }\href
  {http://stacks.iop.org/1367-2630/17/i=11/a=115013} {\bibfield  {journal}
  {\bibinfo  {journal} {New J. Phys.}\ }\textbf {\bibinfo {volume} {17}},\
  \bibinfo {pages} {115013} (\bibinfo {year} {2015})}\BibitemShut {NoStop}%
\bibitem [{\citenamefont {Narasimhachar}\ and\ \citenamefont
  {Gour}(2015)}]{Narasimhachar2015}%
  \BibitemOpen
  \bibfield  {author} {\bibinfo {author} {\bibfnamefont {V.}~\bibnamefont
  {Narasimhachar}}\ and\ \bibinfo {author} {\bibfnamefont {G.}~\bibnamefont
  {Gour}},\ }\href {http://dx.doi.org/10.1038/ncomms8689} {\bibfield  {journal}
  {\bibinfo  {journal} {Nat. Commun.}\ }\textbf {\bibinfo {volume} {6}},\
  \bibinfo {pages} {7689} (\bibinfo {year} {2015})}\BibitemShut {NoStop}%
\bibitem [{\citenamefont {Streltsov}()}]{StreltsovA2015}%
  \BibitemOpen
  \bibfield  {author} {\bibinfo {author} {\bibfnamefont {A.}~\bibnamefont
  {Streltsov}},\ }\href@noop {} {}\Eprint {http://arxiv.org/abs/1511.08346}
  {arXiv:1511.08346} \BibitemShut {NoStop}%
\bibitem [{\citenamefont {Yadin}\ \emph {et~al.}()\citenamefont {Yadin},
  \citenamefont {Ma}, \citenamefont {Girolami}, \citenamefont {Gu},\ and\
  \citenamefont {Vedral}}]{Yadin2015}%
  \BibitemOpen
  \bibfield  {author} {\bibinfo {author} {\bibfnamefont {B.}~\bibnamefont
  {Yadin}}, \bibinfo {author} {\bibfnamefont {J.}~\bibnamefont {Ma}}, \bibinfo
  {author} {\bibfnamefont {D.}~\bibnamefont {Girolami}}, \bibinfo {author}
  {\bibfnamefont {M.}~\bibnamefont {Gu}}, \ and\ \bibinfo {author}
  {\bibfnamefont {V.}~\bibnamefont {Vedral}},\ }\href@noop {} {}\Eprint
  {http://arxiv.org/abs/1512.02085} {arXiv:1512.02085} \BibitemShut {NoStop}%
\bibitem [{\citenamefont {Streltsov}\ \emph {et~al.}(2015)\citenamefont
  {Streltsov}, \citenamefont {Singh}, \citenamefont {Dhar}, \citenamefont
  {Bera},\ and\ \citenamefont {Adesso}}]{Streltsov2015}%
  \BibitemOpen
  \bibfield  {author} {\bibinfo {author} {\bibfnamefont {A.}~\bibnamefont
  {Streltsov}}, \bibinfo {author} {\bibfnamefont {U.}~\bibnamefont {Singh}},
  \bibinfo {author} {\bibfnamefont {H.~S.}\ \bibnamefont {Dhar}}, \bibinfo
  {author} {\bibfnamefont {M.~N.}\ \bibnamefont {Bera}}, \ and\ \bibinfo
  {author} {\bibfnamefont {G.}~\bibnamefont {Adesso}},\ }\href {\doibase
  10.1103/PhysRevLett.115.020403} {\bibfield  {journal} {\bibinfo  {journal}
  {Phys. Rev. Lett.}\ }\textbf {\bibinfo {volume} {115}},\ \bibinfo {pages}
  {020403} (\bibinfo {year} {2015})}\BibitemShut {NoStop}%
\bibitem [{\citenamefont {Bera}\ \emph {et~al.}(2015)\citenamefont {Bera},
  \citenamefont {Qureshi}, \citenamefont {Siddiqui},\ and\ \citenamefont
  {Pati}}]{Manab2015}%
  \BibitemOpen
  \bibfield  {author} {\bibinfo {author} {\bibfnamefont {M.~N.}\ \bibnamefont
  {Bera}}, \bibinfo {author} {\bibfnamefont {T.}~\bibnamefont {Qureshi}},
  \bibinfo {author} {\bibfnamefont {M.~A.}\ \bibnamefont {Siddiqui}}, \ and\
  \bibinfo {author} {\bibfnamefont {A.~K.}\ \bibnamefont {Pati}},\ }\href
  {\doibase 10.1103/PhysRevA.92.012118} {\bibfield  {journal} {\bibinfo
  {journal} {Phys. Rev. A}\ }\textbf {\bibinfo {volume} {92}},\ \bibinfo
  {pages} {012118} (\bibinfo {year} {2015})}\BibitemShut {NoStop}%
\bibitem [{\citenamefont {Angelo}\ and\ \citenamefont
  {Ribeiro}(2015)}]{Angelo2015}%
  \BibitemOpen
  \bibfield  {author} {\bibinfo {author} {\bibfnamefont {R.}~\bibnamefont
  {Angelo}}\ and\ \bibinfo {author} {\bibfnamefont {A.}~\bibnamefont
  {Ribeiro}},\ }\href {\doibase 10.1007/s10701-015-9913-6} {\bibfield
  {journal} {\bibinfo  {journal} {Found. Phys.}\ }\textbf {\bibinfo {volume}
  {45}},\ \bibinfo {pages} {1407} (\bibinfo {year} {2015})}\BibitemShut
  {NoStop}%
\bibitem [{\citenamefont {Singh}\ \emph {et~al.}(2015)\citenamefont {Singh},
  \citenamefont {Bera}, \citenamefont {Dhar},\ and\ \citenamefont
  {Pati}}]{Uttam2015}%
  \BibitemOpen
  \bibfield  {author} {\bibinfo {author} {\bibfnamefont {U.}~\bibnamefont
  {Singh}}, \bibinfo {author} {\bibfnamefont {M.~N.}\ \bibnamefont {Bera}},
  \bibinfo {author} {\bibfnamefont {H.~S.}\ \bibnamefont {Dhar}}, \ and\
  \bibinfo {author} {\bibfnamefont {A.~K.}\ \bibnamefont {Pati}},\ }\href
  {\doibase 10.1103/PhysRevA.91.052115} {\bibfield  {journal} {\bibinfo
  {journal} {Phys. Rev. A}\ }\textbf {\bibinfo {volume} {91}},\ \bibinfo
  {pages} {052115} (\bibinfo {year} {2015})}\BibitemShut {NoStop}%
\bibitem [{\citenamefont {Shao}\ \emph {et~al.}(2015)\citenamefont {Shao},
  \citenamefont {Xi}, \citenamefont {Fan},\ and\ \citenamefont {Li}}]{Fan15}%
  \BibitemOpen
  \bibfield  {author} {\bibinfo {author} {\bibfnamefont {L.-H.}\ \bibnamefont
  {Shao}}, \bibinfo {author} {\bibfnamefont {Z.}~\bibnamefont {Xi}}, \bibinfo
  {author} {\bibfnamefont {H.}~\bibnamefont {Fan}}, \ and\ \bibinfo {author}
  {\bibfnamefont {Y.}~\bibnamefont {Li}},\ }\href {\doibase
  10.1103/PhysRevA.91.042120} {\bibfield  {journal} {\bibinfo  {journal} {Phys.
  Rev. A}\ }\textbf {\bibinfo {volume} {91}},\ \bibinfo {pages} {042120}
  (\bibinfo {year} {2015})}\BibitemShut {NoStop}%
\bibitem [{\citenamefont {Pires}\ \emph {et~al.}(2015)\citenamefont {Pires},
  \citenamefont {C\'eleri},\ and\ \citenamefont {Soares-Pinto}}]{Pinto2015}%
  \BibitemOpen
  \bibfield  {author} {\bibinfo {author} {\bibfnamefont {D.~P.}\ \bibnamefont
  {Pires}}, \bibinfo {author} {\bibfnamefont {L.~C.}\ \bibnamefont {C\'eleri}},
  \ and\ \bibinfo {author} {\bibfnamefont {D.~O.}\ \bibnamefont
  {Soares-Pinto}},\ }\href {\doibase 10.1103/PhysRevA.91.042330} {\bibfield
  {journal} {\bibinfo  {journal} {Phys. Rev. A}\ }\textbf {\bibinfo {volume}
  {91}},\ \bibinfo {pages} {042330} (\bibinfo {year} {2015})}\BibitemShut
  {NoStop}%
\bibitem [{\citenamefont {Du}\ \emph {et~al.}(2015)\citenamefont {Du},
  \citenamefont {Bai},\ and\ \citenamefont {Guo}}]{Du2015}%
  \BibitemOpen
  \bibfield  {author} {\bibinfo {author} {\bibfnamefont {S.}~\bibnamefont
  {Du}}, \bibinfo {author} {\bibfnamefont {Z.}~\bibnamefont {Bai}}, \ and\
  \bibinfo {author} {\bibfnamefont {Y.}~\bibnamefont {Guo}},\ }\href {\doibase
  10.1103/PhysRevA.91.052120} {\bibfield  {journal} {\bibinfo  {journal} {Phys.
  Rev. A}\ }\textbf {\bibinfo {volume} {91}},\ \bibinfo {pages} {052120}
  (\bibinfo {year} {2015})}\BibitemShut {NoStop}%
\bibitem [{\citenamefont {Yao}\ \emph {et~al.}(2015)\citenamefont {Yao},
  \citenamefont {Xiao}, \citenamefont {Ge},\ and\ \citenamefont
  {Sun}}]{Yao2015}%
  \BibitemOpen
  \bibfield  {author} {\bibinfo {author} {\bibfnamefont {Y.}~\bibnamefont
  {Yao}}, \bibinfo {author} {\bibfnamefont {X.}~\bibnamefont {Xiao}}, \bibinfo
  {author} {\bibfnamefont {L.}~\bibnamefont {Ge}}, \ and\ \bibinfo {author}
  {\bibfnamefont {C.~P.}\ \bibnamefont {Sun}},\ }\href {\doibase
  10.1103/PhysRevA.92.022112} {\bibfield  {journal} {\bibinfo  {journal} {Phys.
  Rev. A}\ }\textbf {\bibinfo {volume} {92}},\ \bibinfo {pages} {022112}
  (\bibinfo {year} {2015})}\BibitemShut {NoStop}%
\bibitem [{\citenamefont {Xi}\ \emph {et~al.}(2015)\citenamefont {Xi},
  \citenamefont {Li},\ and\ \citenamefont {Fan}}]{Xi2015}%
  \BibitemOpen
  \bibfield  {author} {\bibinfo {author} {\bibfnamefont {Z.}~\bibnamefont
  {Xi}}, \bibinfo {author} {\bibfnamefont {Y.}~\bibnamefont {Li}}, \ and\
  \bibinfo {author} {\bibfnamefont {H.}~\bibnamefont {Fan}},\ }\href
  {http://dx.doi.org/10.1038/srep10922} {\bibfield  {journal} {\bibinfo
  {journal} {Sci. Rep.}\ }\textbf {\bibinfo {volume} {5}},\ \bibinfo {pages}
  {10922} (\bibinfo {year} {2015})}\BibitemShut {NoStop}%
\bibitem [{\citenamefont {Girolami}(2014)}]{Girolami14}%
  \BibitemOpen
  \bibfield  {author} {\bibinfo {author} {\bibfnamefont {D.}~\bibnamefont
  {Girolami}},\ }\href {\doibase 10.1103/PhysRevLett.113.170401} {\bibfield
  {journal} {\bibinfo  {journal} {Phys. Rev. Lett.}\ }\textbf {\bibinfo
  {volume} {113}},\ \bibinfo {pages} {170401} (\bibinfo {year}
  {2014})}\BibitemShut {NoStop}%
\bibitem [{\citenamefont {Bromley}\ \emph {et~al.}(2015)\citenamefont
  {Bromley}, \citenamefont {Cianciaruso},\ and\ \citenamefont
  {Adesso}}]{Bromley2015}%
  \BibitemOpen
  \bibfield  {author} {\bibinfo {author} {\bibfnamefont {T.~R.}\ \bibnamefont
  {Bromley}}, \bibinfo {author} {\bibfnamefont {M.}~\bibnamefont
  {Cianciaruso}}, \ and\ \bibinfo {author} {\bibfnamefont {G.}~\bibnamefont
  {Adesso}},\ }\href {\doibase 10.1103/PhysRevLett.114.210401} {\bibfield
  {journal} {\bibinfo  {journal} {Phys. Rev. Lett.}\ }\textbf {\bibinfo
  {volume} {114}},\ \bibinfo {pages} {210401} (\bibinfo {year}
  {2015})}\BibitemShut {NoStop}%
\bibitem [{\citenamefont {Killoran}\ \emph {et~al.}(2016)\citenamefont
  {Killoran}, \citenamefont {Steinhoff},\ and\ \citenamefont
  {Plenio}}]{Killoran2015}%
  \BibitemOpen
  \bibfield  {author} {\bibinfo {author} {\bibfnamefont {N.}~\bibnamefont
  {Killoran}}, \bibinfo {author} {\bibfnamefont {F.~E.~S.}\ \bibnamefont
  {Steinhoff}}, \ and\ \bibinfo {author} {\bibfnamefont {M.~B.}\ \bibnamefont
  {Plenio}},\ }\href {\doibase 10.1103/PhysRevLett.116.080402} {\bibfield
  {journal} {\bibinfo  {journal} {Phys. Rev. Lett.}\ }\textbf {\bibinfo
  {volume} {116}},\ \bibinfo {pages} {080402} (\bibinfo {year}
  {2016})}\BibitemShut {NoStop}%
\bibitem [{\citenamefont {Zhang}\ \emph {et~al.}(2015)\citenamefont {Zhang},
  \citenamefont {Pati},\ and\ \citenamefont {Wu}}]{ZhangA2015}%
  \BibitemOpen
  \bibfield  {author} {\bibinfo {author} {\bibfnamefont {L.}~\bibnamefont
  {Zhang}}, \bibinfo {author} {\bibfnamefont {A.~K.}\ \bibnamefont {Pati}}, \
  and\ \bibinfo {author} {\bibfnamefont {J.}~\bibnamefont {Wu}},\ }\href
  {\doibase 10.1103/PhysRevA.92.022316} {\bibfield  {journal} {\bibinfo
  {journal} {Phys. Rev. A}\ }\textbf {\bibinfo {volume} {92}},\ \bibinfo
  {pages} {022316} (\bibinfo {year} {2015})}\BibitemShut {NoStop}%
\bibitem [{\citenamefont {{Singh}}\ \emph {et~al.}()\citenamefont {{Singh}},
  \citenamefont {{Bera}}, \citenamefont {{Misra}},\ and\ \citenamefont
  {{Pati}}}]{UttamA2015}%
  \BibitemOpen
  \bibfield  {author} {\bibinfo {author} {\bibfnamefont {U.}~\bibnamefont
  {{Singh}}}, \bibinfo {author} {\bibfnamefont {M.~N.}\ \bibnamefont {{Bera}}},
  \bibinfo {author} {\bibfnamefont {A.}~\bibnamefont {{Misra}}}, \ and\
  \bibinfo {author} {\bibfnamefont {A.~K.}\ \bibnamefont {{Pati}}},\
  }\href@noop {} {}\Eprint {http://arxiv.org/abs/1506.08186} {arXiv:1506.08186}
  \BibitemShut {NoStop}%
\bibitem [{\citenamefont {Cheng}\ and\ \citenamefont {Hall}(2015)}]{Cheng2015}%
  \BibitemOpen
  \bibfield  {author} {\bibinfo {author} {\bibfnamefont {S.}~\bibnamefont
  {Cheng}}\ and\ \bibinfo {author} {\bibfnamefont {M.~J.~W.}\ \bibnamefont
  {Hall}},\ }\href {\doibase 10.1103/PhysRevA.92.042101} {\bibfield  {journal}
  {\bibinfo  {journal} {Phys. Rev. A}\ }\textbf {\bibinfo {volume} {92}},\
  \bibinfo {pages} {042101} (\bibinfo {year} {2015})}\BibitemShut {NoStop}%
\bibitem [{\citenamefont {{Mondal}}\ \emph {et~al.}()\citenamefont {{Mondal}},
  \citenamefont {{Pramanik}},\ and\ \citenamefont {{Pati}}}]{Mondal2015}%
  \BibitemOpen
  \bibfield  {author} {\bibinfo {author} {\bibfnamefont {D.}~\bibnamefont
  {{Mondal}}}, \bibinfo {author} {\bibfnamefont {T.}~\bibnamefont
  {{Pramanik}}}, \ and\ \bibinfo {author} {\bibfnamefont {A.~K.}\ \bibnamefont
  {{Pati}}},\ }\href@noop {} {}\Eprint {http://arxiv.org/abs/1508.03770}
  {arXiv:1508.03770} \BibitemShut {NoStop}%
\bibitem [{\citenamefont {Mondal}\ \emph {et~al.}()\citenamefont {Mondal},
  \citenamefont {Datta},\ and\ \citenamefont {Sazim}}]{MondalA2015}%
  \BibitemOpen
  \bibfield  {author} {\bibinfo {author} {\bibfnamefont {D.}~\bibnamefont
  {Mondal}}, \bibinfo {author} {\bibfnamefont {C.}~\bibnamefont {Datta}}, \
  and\ \bibinfo {author} {\bibfnamefont {S.~K.}\ \bibnamefont {Sazim}},\
  }\href@noop {} {}\Eprint {http://arxiv.org/abs/1506.03199} {arXiv:1506.03199}
  \BibitemShut {NoStop}%
\bibitem [{\citenamefont {{Kumar}}()}]{Kumar2015}%
  \BibitemOpen
  \bibfield  {author} {\bibinfo {author} {\bibfnamefont {A.}~\bibnamefont
  {{Kumar}}},\ }\href@noop {} {}\Eprint {http://arxiv.org/abs/1508.00262}
  {arXiv:1508.00262} \BibitemShut {NoStop}%
\bibitem [{\citenamefont {Mani}\ and\ \citenamefont
  {Karimipour}(2015)}]{Mani2015}%
  \BibitemOpen
  \bibfield  {author} {\bibinfo {author} {\bibfnamefont {A.}~\bibnamefont
  {Mani}}\ and\ \bibinfo {author} {\bibfnamefont {V.}~\bibnamefont
  {Karimipour}},\ }\href {\doibase 10.1103/PhysRevA.92.032331} {\bibfield
  {journal} {\bibinfo  {journal} {Phys. Rev. A}\ }\textbf {\bibinfo {volume}
  {92}},\ \bibinfo {pages} {032331} (\bibinfo {year} {2015})}\BibitemShut
  {NoStop}%
\bibitem [{\citenamefont {{Bu}}\ \emph {et~al.}()\citenamefont {{Bu}},
  \citenamefont {{Zhang}},\ and\ \citenamefont {{Wu}}}]{Bu2015}%
  \BibitemOpen
  \bibfield  {author} {\bibinfo {author} {\bibfnamefont {K.}~\bibnamefont
  {{Bu}}}, \bibinfo {author} {\bibfnamefont {L.}~\bibnamefont {{Zhang}}}, \
  and\ \bibinfo {author} {\bibfnamefont {J.}~\bibnamefont {{Wu}}},\ }\href@noop
  {} {}\Eprint {http://arxiv.org/abs/1509.09109} {arXiv:1509.09109}
  \BibitemShut {NoStop}%
\bibitem [{\citenamefont {Chitambar}\ \emph {et~al.}(2016)\citenamefont
  {Chitambar}, \citenamefont {Streltsov}, \citenamefont {Rana}, \citenamefont
  {Bera}, \citenamefont {Adesso},\ and\ \citenamefont
  {Lewenstein}}]{Chitambar2015}%
  \BibitemOpen
  \bibfield  {author} {\bibinfo {author} {\bibfnamefont {E.}~\bibnamefont
  {Chitambar}}, \bibinfo {author} {\bibfnamefont {A.}~\bibnamefont
  {Streltsov}}, \bibinfo {author} {\bibfnamefont {S.}~\bibnamefont {Rana}},
  \bibinfo {author} {\bibfnamefont {M.~N.}\ \bibnamefont {Bera}}, \bibinfo
  {author} {\bibfnamefont {G.}~\bibnamefont {Adesso}}, \ and\ \bibinfo {author}
  {\bibfnamefont {M.}~\bibnamefont {Lewenstein}},\ }\href {\doibase
  10.1103/PhysRevLett.116.070402} {\bibfield  {journal} {\bibinfo  {journal}
  {Phys. Rev. Lett.}\ }\textbf {\bibinfo {volume} {116}},\ \bibinfo {pages}
  {070402} (\bibinfo {year} {2016})}\BibitemShut {NoStop}%
\bibitem [{\citenamefont {{Chitambar}}\ and\ \citenamefont
  {{Hsieh}}()}]{ChitambarA2015}%
  \BibitemOpen
  \bibfield  {author} {\bibinfo {author} {\bibfnamefont {E.}~\bibnamefont
  {{Chitambar}}}\ and\ \bibinfo {author} {\bibfnamefont {M.-H.}\ \bibnamefont
  {{Hsieh}}},\ }\href@noop {} {}\Eprint {http://arxiv.org/abs/1509.07458}
  {arXiv:1509.07458} \BibitemShut {NoStop}%
\bibitem [{\citenamefont {Bu}\ \emph {et~al.}()\citenamefont {Bu},
  \citenamefont {Singh},\ and\ \citenamefont {Wu}}]{BuA2015}%
  \BibitemOpen
  \bibfield  {author} {\bibinfo {author} {\bibfnamefont {K.}~\bibnamefont
  {Bu}}, \bibinfo {author} {\bibfnamefont {U.}~\bibnamefont {Singh}}, \ and\
  \bibinfo {author} {\bibfnamefont {J.}~\bibnamefont {Wu}},\ }\href@noop {}
  {}\Eprint {http://arxiv.org/abs/1511.08071} {arXiv:1511.08071} \BibitemShut
  {NoStop}%
\bibitem [{\citenamefont {{Streltsov}}\ \emph {et~al.}()\citenamefont
  {{Streltsov}}, \citenamefont {{Rana}}, \citenamefont {{Nath Bera}},\ and\
  \citenamefont {{Lewenstein}}}]{StreltsovAA2015}%
  \BibitemOpen
  \bibfield  {author} {\bibinfo {author} {\bibfnamefont {A.}~\bibnamefont
  {{Streltsov}}}, \bibinfo {author} {\bibfnamefont {S.}~\bibnamefont {{Rana}}},
  \bibinfo {author} {\bibfnamefont {M.}~\bibnamefont {{Nath Bera}}}, \ and\
  \bibinfo {author} {\bibfnamefont {M.}~\bibnamefont {{Lewenstein}}},\
  }\href@noop {} {}\Eprint {http://arxiv.org/abs/1509.07456} {arXiv:1509.07456}
  \BibitemShut {NoStop}%
\bibitem [{\citenamefont {{Chitambar}}\ and\ \citenamefont
  {{Gour}}()}]{ChitambarA2016}%
  \BibitemOpen
  \bibfield  {author} {\bibinfo {author} {\bibfnamefont {E.}~\bibnamefont
  {{Chitambar}}}\ and\ \bibinfo {author} {\bibfnamefont {G.}~\bibnamefont
  {{Gour}}},\ }\href@noop {} {}\Eprint {http://arxiv.org/abs/1602.06969}
  {arXiv:1602.06969} \BibitemShut {NoStop}%
\bibitem [{\citenamefont {{Marvian}}\ and\ \citenamefont
  {{Spekkens}}()}]{Marvian2016}%
  \BibitemOpen
  \bibfield  {author} {\bibinfo {author} {\bibfnamefont {I.}~\bibnamefont
  {{Marvian}}}\ and\ \bibinfo {author} {\bibfnamefont {R.~W.}\ \bibnamefont
  {{Spekkens}}},\ }\href@noop {} {}\Eprint {http://arxiv.org/abs/1602.08049}
  {arXiv:1602.08049} \BibitemShut {NoStop}%
\bibitem [{\citenamefont {Peng}\ \emph {et~al.}(2016)\citenamefont {Peng},
  \citenamefont {Jiang},\ and\ \citenamefont {Fan}}]{Peng2016}%
  \BibitemOpen
  \bibfield  {author} {\bibinfo {author} {\bibfnamefont {Y.}~\bibnamefont
  {Peng}}, \bibinfo {author} {\bibfnamefont {Y.}~\bibnamefont {Jiang}}, \ and\
  \bibinfo {author} {\bibfnamefont {H.}~\bibnamefont {Fan}},\ }\href {\doibase
  10.1103/PhysRevA.93.032326} {\bibfield  {journal} {\bibinfo  {journal} {Phys.
  Rev. A}\ }\textbf {\bibinfo {volume} {93}},\ \bibinfo {pages} {032326}
  (\bibinfo {year} {2016})}\BibitemShut {NoStop}%
\bibitem [{\citenamefont {Winter}\ and\ \citenamefont {Yang}()}]{Winter2015}%
  \BibitemOpen
  \bibfield  {author} {\bibinfo {author} {\bibfnamefont {A.}~\bibnamefont
  {Winter}}\ and\ \bibinfo {author} {\bibfnamefont {D.}~\bibnamefont {Yang}},\
  }\href@noop {} {}\Eprint {http://arxiv.org/abs/1506.07975} {arXiv:1506.07975}
  \BibitemShut {NoStop}%
\bibitem [{\citenamefont {Jaynes}(1957{\natexlab{a}})}]{JaynesA1957}%
  \BibitemOpen
  \bibfield  {author} {\bibinfo {author} {\bibfnamefont {E.~T.}\ \bibnamefont
  {Jaynes}},\ }\href {\doibase 10.1103/PhysRev.106.620} {\bibfield  {journal}
  {\bibinfo  {journal} {Phys. Rev.}\ }\textbf {\bibinfo {volume} {106}},\
  \bibinfo {pages} {620} (\bibinfo {year} {1957}{\natexlab{a}})}\BibitemShut
  {NoStop}%
\bibitem [{\citenamefont {Jaynes}(1957{\natexlab{b}})}]{JaynesB1957}%
  \BibitemOpen
  \bibfield  {author} {\bibinfo {author} {\bibfnamefont {E.~T.}\ \bibnamefont
  {Jaynes}},\ }\href {\doibase 10.1103/PhysRev.108.171} {\bibfield  {journal}
  {\bibinfo  {journal} {Phys. Rev.}\ }\textbf {\bibinfo {volume} {108}},\
  \bibinfo {pages} {171} (\bibinfo {year} {1957}{\natexlab{b}})}\BibitemShut
  {NoStop}%
\bibitem [{Note1()}]{Note1}%
  \BibitemOpen
  \bibinfo {note} {A $d\times d$ matrix $M=(M_{ab})$ ($a\in \protect
  \{1,\protect \ldots ,d\protect \}$, $b\in \protect \{1,\protect \ldots
  ,d\protect \}$ and $M_{ab}\geq 0$) is called a doubly stochastic matrix if
  $\DOTSB \sum@ \slimits@ _aM_{ab}=1$ for all $b$ and $\DOTSB \sum@ \slimits@
  _bM_{ab}=1$ for all $a$ \cite {Bhatia1997}}\BibitemShut {NoStop}%
\bibitem [{\citenamefont {Canosa}\ \emph {et~al.}(2006)\citenamefont {Canosa},
  \citenamefont {Rossignoli},\ and\ \citenamefont {Portesi}}]{Canosa2005}%
  \BibitemOpen
  \bibfield  {author} {\bibinfo {author} {\bibfnamefont {N.}~\bibnamefont
  {Canosa}}, \bibinfo {author} {\bibfnamefont {R.}~\bibnamefont {Rossignoli}},
  \ and\ \bibinfo {author} {\bibfnamefont {M.}~\bibnamefont {Portesi}},\ }\href
  {\doibase http://dx.doi.org/10.1016/j.physa.2005.11.044} {\bibfield
  {journal} {\bibinfo  {journal} {Physica A: Statistical Mechanics and its
  Applications}\ }\textbf {\bibinfo {volume} {368}},\ \bibinfo {pages} {435 }
  (\bibinfo {year} {2006})}\BibitemShut {NoStop}%
\bibitem [{\citenamefont {Bhatia}(1997)}]{Bhatia1997}%
  \BibitemOpen
  \bibfield  {author} {\bibinfo {author} {\bibfnamefont {R.}~\bibnamefont
  {Bhatia}},\ }\href {\doibase 10.1007/978-1-4612-0653-8} {\emph {\bibinfo
  {title} {Matrix Analysis}}}\ (\bibinfo  {publisher} {Springer-Verlag New
  York},\ \bibinfo {year} {1997})\BibitemShut {NoStop}%
\bibitem [{Note2()}]{Note2}%
  \BibitemOpen
  \bibinfo {note} {Consider a $d\times d$ real orthogonal matrix $O=(O_{ab})$
  ($a\in \protect \{1,\protect \ldots ,d\protect \}$, $b\in \protect
  \{1,\protect \ldots ,d\protect \}$ and $O_{ab}\geq 0$) such that
  $O^TO=\protect \mathbb {I}$. Here superscript $T$ denotes the transpose and
  $\protect \mathbb {I}$ is a $d\times d$ identity matrix. The matrix
  $(O_{ab}^2)$ forms a doubly stochastic matrix and such a matrix is called
  orthostochastic matrix \cite {Bhatia1997}.}\BibitemShut {Stop}%
\bibitem [{Note3()}]{Note3}%
  \BibitemOpen
  \bibinfo {note} {$X-$states are a special class of states that have been
  analyzed in great detail in context of analytical calculations of quantum
  discord \cite {Henderson2001, Ollivier2001} among others. The term $X-$states
  has been coined in Ref. \cite {Yu2007} for their visual appearance. For a
  bipartite qubit quantum systems, the states $ \rho _X$ of the form \begin
  {align*} \rho _X:=\left ( \begin {array}{cccc} \rho _{00} &0 &0 & \rho
  _{03}\\ 0 & \rho _{11} & \rho _{12} & 0\\ 0 & \rho _{21} & \rho _{22} &0\\
  \rho _{30} &0 &0 & \rho _{33} \end {array} \right ) \end {align*} are called
  $X-$states. In general, any density matrix that has nonzero elements only at
  the diagonals and anti-diagonals is called an $X-$ state. For a detailed
  exposition of $X-$states see Ref. \cite {Rau2009}.}\BibitemShut {Stop}%
\bibitem [{\citenamefont {Henderson}\ and\ \citenamefont
  {Vedral}(2001)}]{Henderson2001}%
  \BibitemOpen
  \bibfield  {author} {\bibinfo {author} {\bibfnamefont {L.}~\bibnamefont
  {Henderson}}\ and\ \bibinfo {author} {\bibfnamefont {V.}~\bibnamefont
  {Vedral}},\ }\href {http://stacks.iop.org/0305-4470/34/i=35/a=315} {\bibfield
   {journal} {\bibinfo  {journal} {J. Phys. A: Math. Gen.}\ }\textbf {\bibinfo
  {volume} {34}},\ \bibinfo {pages} {6899} (\bibinfo {year}
  {2001})}\BibitemShut {NoStop}%
\bibitem [{\citenamefont {Ollivier}\ and\ \citenamefont
  {Zurek}(2001)}]{Ollivier2001}%
  \BibitemOpen
  \bibfield  {author} {\bibinfo {author} {\bibfnamefont {H.}~\bibnamefont
  {Ollivier}}\ and\ \bibinfo {author} {\bibfnamefont {W.~H.}\ \bibnamefont
  {Zurek}},\ }\href {\doibase 10.1103/PhysRevLett.88.017901} {\bibfield
  {journal} {\bibinfo  {journal} {Phys. Rev. Lett.}\ }\textbf {\bibinfo
  {volume} {88}},\ \bibinfo {pages} {017901} (\bibinfo {year}
  {2001})}\BibitemShut {NoStop}%
\bibitem [{\citenamefont {Yu}\ and\ \citenamefont {Eberly}(2007)}]{Yu2007}%
  \BibitemOpen
  \bibfield  {author} {\bibinfo {author} {\bibfnamefont {T.}~\bibnamefont
  {Yu}}\ and\ \bibinfo {author} {\bibfnamefont {J.~H.}\ \bibnamefont
  {Eberly}},\ }\href@noop {} {\bibfield  {journal} {\bibinfo  {journal}
  {Quantum Info. Comput.}\ }\textbf {\bibinfo {volume} {7}},\ \bibinfo {pages}
  {459} (\bibinfo {year} {2007})}\BibitemShut {NoStop}%
\bibitem [{\citenamefont {Rau}(2009)}]{Rau2009}%
  \BibitemOpen
  \bibfield  {author} {\bibinfo {author} {\bibfnamefont {A.~R.~P.}\
  \bibnamefont {Rau}},\ }\href
  {http://stacks.iop.org/1751-8121/42/i=41/a=412002} {\bibfield  {journal}
  {\bibinfo  {journal} {J. Phys. A: Math. Theor.}\ }\textbf {\bibinfo {volume}
  {42}},\ \bibinfo {pages} {412002} (\bibinfo {year} {2009})}\BibitemShut
  {NoStop}%
\end{thebibliography}%

\end{document}